\newtheorem{theorem}{Theorem}
\newtheorem{lemma}{Lemma}
\newcommand{\co}{\"{o}}
\begin{document}
\title{A Secure Communication Game with a Relay Helping the Eavesdropper}
\author{Melda~Yuksel,~\IEEEmembership{Member,~IEEE,}
        ~Xi~Liu,~\IEEEmembership{Student~Member,~IEEE}
        and~Elza~Erkip,~\IEEEmembership{Fellow,~IEEE}
\thanks{Copyright (c) 2010 IEEE. Personal use of this material is permitted. However, permission to use this material for any other purposes must be obtained from the IEEE by sending a request to pubs-permissions@ieee.org.}

\thanks{Manuscript received September 15, 2010.; revised January 20, 2011. This material is based upon work partially supported by the National Science Foundation under Grant No.0635177 and by the New York State Center for Advanced Technology in Telecommunications (CATT). The material in this paper was presented in part at the IEEE Information Theory Workshop, Taormina, Italy, October 2009, and at the 21st Annual IEEE International Symposium on Personal, Indoor and Mobile Radio Communications, Istanbul, Turkey, September 2010.}

\thanks{M. Yuksel with the Electrical and Electronics Engineering Department, TOBB
University of Economics and Technology, Ankara, Turkey (e-mail:
yuksel@etu.edu.tr).

X. Liu and E. Erkip are with the Electrical and Computer Engineering Department,
Polytechnic Institute of New York University, Brooklyn, NY 11201 USA (e-mail: xliu02@students.poly.edu,
elza@poly.edu).}}\maketitle

\begin{abstract}
In this work a four terminal complex Gaussian network composed of a source, a destination, an eavesdropper and a jammer relay is studied under two different set of assumptions: (i) The jammer relay does not hear the source transmission, and (ii) The jammer relay is causally given the source message. In both cases the jammer relay assists the eavesdropper and aims to decrease the achievable secrecy rates. The source, on the other hand, aims to increase it. To help the eavesdropper, the jammer relay can use pure relaying and/or send interference. Each of the problems is formulated as a two-player, non-cooperative, zero-sum continuous game. Assuming Gaussian strategies at the source and the jammer relay in the first problem, the Nash equilibrium is found and shown to be achieved with mixed strategies in general. The optimal cumulative distribution functions (cdf) for the source and the jammer relay that achieve the value of the game, which is the Nash equilibrium secrecy rate, are found. For the second problem, the Nash equilibrium  solution is found and the results are compared to the case when the jammer relay is not informed about the source message.
\end{abstract}
\textbf{Keywords: eavesdropping, jamming, physical layer security, relay channel, wire-tap channel.}

\section{Introduction}



In wireless communications, messages are broadcasted, and any
transmission can be overheard by nearby nodes. If eavesdroppers are present in the environment, then all confidential
information become vulnerable and can be identified. Therefore, security against
eavesdropping is an essential system requirement for all wireless
communication applications.

In addition to eavesdropping, wireless networks are also prone to jamming. In contrast to eavesdropping, jamming is an active attack, in which deliberate signals are transmitted to prevent proper reception at the intended receiver. Node capture attacks can also take place to compromise message confidentiality \cite{LuQianChen07,TaguePoovendran08}.


Security against eavesdropping using information theoretic principles was first considered in \cite{Shannon49}. In \cite{Wyner75}, the wire-tap channel was studied
for the degraded case, when the eavesdropper's received signal is a degraded version of the legitimate receiver's observation. This model was extended to less noisy and more capable wire-tap channels
in \cite{CsiszarK78}. The Gaussian wire-tap channel was studied in \cite{LeungYCHellman78}.



Unconditional security for the relay channel with an external eavesdropper is investigated in \cite{LaiEG06_2}, \cite{YukselE07} and \cite{AggarwalSCP08}. In \cite{LaiEG06_2}, the authors suggest the noise forwarding scheme, where the relay transmits dummy codewords that can be decoded at the destination. While sending dummy codewords does not hurt the legitimate communication, it increases the confusion at the eavesdropper, and hence helps achieve a higher secrecy rate. Noise forwarding scheme is similar to cooperative jamming \cite{TekinYener06_2}, in which one of the users in the system injects noise to increase achievable secrecy rates in multi-access and two-way channels. The paper \cite{TangLiuSpasojevicPoor07_2} ties \cite{LaiEG06_2} and \cite{TekinYener06_2} together, and shows that the relay can choose between sending structured codewords and pure noise to increase achievable secrecy rates even further. When the relay sends dummy codewords or forwards noise, the gains in achievable secrecy rates are due to the \emph{interference} the relay creates. Thus, in the rest of the paper, we will collect both schemes under the name \emph{interference assistance} as in \cite{TangLiuSpasojevicPoor07_2}.

In this paper we consider a four terminal complex Gaussian network with a source-destination pair, a relay and an external eavesdropper. Unlike the above mentioned works, in which relay's transmissions aim to help the legitimate source-destination communication, we assume that the relay is captured by an adversary and aims to help the eavesdropper instead of the source-destination communication. Thus, we refer to the relay node as the \emph{jammer relay}. The jammer relay is capable of helping the eavesdropper as well as jamming the destination to reach its objective of smaller secrecy rates.

Reliable communication in the presence of arbitrary jamming strategies and no eavesdropper is in general a complex problem \cite{LapidothNarayan98}. In some special cases, the optimal jammer strategy as well as the optimal encoding at the source node can be solved. For example \cite{KashyapBasarSrikant04} solves for optimal transmitter and jammer strategies under a game-theory framework when the jammer is informed about the transmitter's signal. Reference \cite{ShafieeUlukus05} investigates the multiple access channel with a correlated jammer. In \cite{WangGiannakis08} authors study a game between a jammer and a relay, where the relay assists the source-destination communication.


In our preliminary work we analyzed the source and the jammer relay transmitting in
orthogonal separate time slots \cite{YukselE07_2}. Since the jammer relay is
malicious, the legitimate receiver can choose not to listen to it. Hence in such a scenario interference assisted eavesdropping is not a possibility, the jammer relay does pure relaying. The eavesdropper listens to both the source and the jammer relay to decrease its
equivocation.

%

More generally, the transmissions of the source and the jammer relay are not orthogonal and the jammer relay may acquire some knowledge of the source signal. Then both interference assisted eavesdropping and pure relaying are possible. When the jammer relay can only operate in a half-duplex fashion, there is a tradeoff  between pure relaying and interference assisted eavesdropping. Pure relaying can either try to convey source information to the eavesdropper or attempt to cancel the source signal at the destination by jamming but the relay has to listen the source first and can only participate in transmission a fraction of the time. In interference assisted eavesdropping, the jammer relay does not utilize the overheard signal and can simultaneously transmit with the source, but its signal does not carry useful information about the source message and only contains interference to confuse the destination.

While solving the half-duplex jammer relay problem seems difficult, as a first step to address the general case, in this paper we investigate two extreme situations: (i) Problem 1- The jammer relay does not hear the source
transmission and can therefore transmit simultaneously with the source, and (ii) Problem 2- The jammer relay knows the source signal causally. In the first problem, pure relaying is not an option, and only interference assisted eavesdropping protocols are meaningful. On the other hand, both pure relaying and interference assisted eavesdropping can be useful in the second problem.

As the jammer relay and the source have conflicting interests about the value of the secrecy rate, we formulate each of the problems as a two-player non-cooperative zero-sum continuous game and find achievable secrecy rates for the source-destination communication. Assuming Gaussian strategies at the source and the jammer relay in the first problem, we calculate the Nash equilibrium and show that it is achieved with mixed strategies in general. We also state the optimal cumulative distribution functions (cdf) for the source and the jammer relay that achieve the value of the game, which is the Nash equilibrium secrecy rate. For the second problem, we find the Nash equilibrium solution and compare the results to the case when the jammer relay is not informed about the source message.

In the next section the general system model is described. In Section~\ref{sec:Problem1} we solve the first problem, in which the jammer relay does not hear the source transmission. In Section~\ref{sec:Problem2} we attack the second problem assuming the jammer relay is given the source signal causally. In Section~\ref{sec:NumRes}, numerical results are presented and in Section~\ref{sec:Conclusion} we conclude.

\section{General System Model}\label{sec:systemmodel}

We investigate the four terminal complex Gaussian network composed of a source, a
destination, an eavesdropper and a jammer relay denoted by S, D, E and JR
respectively. The network under investigation is shown in Fig.~\ref{fig:systemmodel}.

The received signals at the destination and the eavesdropper are
\begin{eqnarray}
Y_{D,i} & = &  h_{SD}X_{S,i} + h_{RD}X_{R,i} + Z_{D,i} \label{eqn:YD}\\
Y_{E,i} & = &  h_{SE}X_{S,i} + h_{RE}X_{R,i} + Z_{E,i}, \label{eqn:YE}
\end{eqnarray}
where $X_{S,i}$ and $X_{R,i}$ are the signals the source and the jammer
relay transmit at time $i$, $i=1,...,n$. In the first problem under study, the jammer relay does not hear the source signal. In the second problem, it is assumed that the jammer relay is given the source signal causally, and hence $X_{R,i}$ depends on $X_{S,1}^i$, where $X_{S,1}^i = (X_{S,1},...,X_{S,i})$. The complex channel gains between node $k$ and node $l$ are denoted as $h_{kl}$, $k = {S,R}$, $l ={D,E}$. All channel gains are fixed and assumed to be known at all nodes. The complex additive Gaussian noises at the destination and at the
eavesdropper are respectively denoted as $Z_{D,i}$ and $Z_{E,i}$
and are independent and identically distributed (i.i.d.) with
zero mean and variance $N_D = N_E = N_0$. The source and the jammer relay have
average power constraints $P_S$ and $P_R$. For convenience we will write $\gamma_{kl}  =  {|h_{kl}|^2}{P_k}/N_0$, $k = {S,R}$, $l ={D,E}$, to indicate the received power at node $l$
due to node $k$.

The source aims to send the message $W$ securely to the destination in $n$ channel uses. The secrecy rate, $R_s$ is defined as the maximum information rate such that the secrecy constraint is satisfied; i.e. $\lim_{n\rightarrow \infty}H(W)/n = \lim_{n\rightarrow \infty}H(W|Y_{E,1}^{n})/n$, and the probability of decoding error at the destination approaches zero as $n$ approaches infinity \cite{Wyner75}.

In this problem the source and the jammer relay have opposing interests. The former wants to increase the secrecy rate, and the latter wants the decrease it. Thus, this problem constitutes a zero-sum game, where the utility is the secrecy rate, $R_s$. The source and the jammer relay make their decisions simultaneously, and hence the game is strategic.

If the jammer relay does not exist we have the Gaussian eavesdropper channel, for which sending Gaussian codewords at full power is optimal \cite{LeungYCHellman78}. If the eavesdropper does not exist ($h_{SE}=h_{RE}=0$), then the jammer relay's only objective is to jam the destination to decrease the information rate. This problem is solved in \cite{KashyapBasarSrikant04}, for which it is shown that correlated jamming is optimal. The optimal transmission strategy for the source is to send Gaussian codewords, and the optimal strategy for the jammer is of the form
\begin{eqnarray}
X_{R,i} = \rho X_{S,i} + Z_i, \label{eqn:JRstrategy}
\end{eqnarray}  where $\rho \in \mathds{C}$, $Z_i$ is independent of $X_{S,i}$ and chosen i.i.d. according to distribution $\mathcal{CN}(0,N_Z)$. If the jammer has enough power, then it can completely block the source-destination communication. If its power is not large enough, it shares its power in canceling the source message and sending noise.

For the two settings considered in this paper, the strategy spaces for both the source and the jammer relay can be quite large, and finding the Nash equilibrium solution is very complicated in general. However, the results of \cite{KashyapBasarSrikant04} and \cite{TangLiuSpasojevicPoor07_2} suggest that both correlated jamming and sending structured codewords at the jammer relay have a high potential to decrease achievable secrecy rates. Therefore, we assume that the jammer relay strategies have the same form as in (\ref{eqn:JRstrategy}). Different than \cite{KashyapBasarSrikant04} to incorporate interference assisted jamming/relaying we assume that in (\ref{eqn:JRstrategy}) the signal $Z_1^n$ can be a structured codeword or Gaussian noise. If the jammer relay were a jammer only, then $Z_1^n$ would simply be unstructured noise. However, in this paper it is not merely a jammer, but aims help the eavesdropper to ensure no secret communication takes place. Unstructured noise is useful as it harms the legitimate communication, yet it also harms the eavesdropper. Structured codewords have the potential to help the eavesdropper more as they can be potentially decoded at the eavesdropper.

In Problem 1, as the relay does not hear the source transmission, we have $\rho = 0$. In Problem 2, the level of source and jammer relay signal correlation can be adjusted as a function of $\rho$. Depending on the value of $\rho$, the jammer relay can try to enable decoding at the eavesdropper as in \cite{YukselE07_2}, or it can attempt to cancel $X_S$ at the destination. Under these assumptions, we are interested in finding a scheme attaining the Nash equilibrium of the game, which consists of a pair of optimal strategies for the source and the jammer relay.

\section{Problem 1: The relay does not hear the source}\label{sec:Problem1}

In this section we state the game theoretic formulation for the first problem. Solving this game, we suggest an achievability scheme that results in the Nash equilibrium value of the game. In Sections~\ref{subsec:GameTheoryForm} to \ref{subsec:OtherCases}, we assume all nodes in the system know both the source and the jammer relay's strategies, including the codebooks. In Section \ref{subsec:unknownJRcodebook}, we will extend this study to the more realistic case where the destination does not know the jammer relay codebook. Throughout Section \ref{sec:Problem1}, we assume the source and the jammer relay operate at full power.

\subsection{Game Theoretic Formulation}\label{subsec:GameTheoryForm}

In the first problem, the relay does not hear the source node. Therefore, its strategy cannot depend on the source signal and $\rho = 0$ in (\ref{eqn:JRstrategy}). Then, as described in Section \ref{sec:systemmodel}, the jammer relay either generates dummy codewords or simply forwards noise. When both the source and the jammer relay send structured codewords, the problem becomes similar to a multiple-access channel with an external eavesdropper. In a multiple-access channel, both transmitters need to be decoded at the destination. However, in this game, the jammer relay only sends dummy codewords, and does not need to be decoded either at the destination or at the eavesdropper.

Observe that when the jammer relay sends complex Gaussian codewords/noise with full power, the best distribution for choosing the source codebook is the complex Gaussian distribution~\cite{LiangPS06} with zero mean and variance $P_S$. On the other hand, when the source sends complex Gaussian codewords with zero mean and variance $P_S$, then the jammer relay distribution that decreases the secrecy rate the most is the complex Gaussian distribution with zero mean and variance $P_R$ \cite{DiggaviCover01}. Motivated by these observations, we assume that the source and the jammer relay choose their codebooks according to i.i.d. complex Gaussian with zero mean and variances $P_S$ and $P_R$ respectively. Then, the source strategy is to choose the rate of the information it wants to convey to the destination, and the jammer relay strategy is to choose the rate of its dummy information. In the remainder of this section the source strategy $\xi$ will denote the rate of information, while the jammer relay strategy $\eta$ will denote the rate of dummy information. We argue below that structured codewords for sending dummy information at the jammer relay also include the possibility of sending pure noise. Given source and jammer relay strategies, the secrecy rate, or the payoff, is a function of both $\xi$ and $\eta$. If a certain positive secrecy rate, $R_s(\xi,\eta)$\footnotemark \footnotetext{For Problem 1 the payoff is a function of $\xi$ and $\eta$, whereas in Section \ref{sec:Problem2}, where Problem 2 is discussed, the payoff will be defined as a function $\rho$ and $N_Z$ defined in (\ref{eqn:JRstrategy}).} is achieved, then the source node's payoff is equal to $R_s(\xi,\eta)$ and the relay's payoff is equal to $-R_s(\xi,\eta)$.

Under these assumptions, the destination can decode both the source and the jammer relay codewords if the rate pair $(\xi,\eta)$ is in $\mathcal{R}^{{[D]}}_{\mathrm{MAC}}$
\begin{eqnarray}
\mathcal{R}^{{[D]}}_{\mathrm{MAC}}= \left\{(\xi,\eta)\left|\begin{array}{rcc}
  \xi &\leq & \log(1+\gamma_{SD})\\
\eta &\leq & \log(1+\gamma_{RD})\\
\xi +\eta &\leq & \log(1+\gamma_{SD}+\gamma_{RD})
\end{array}\right.\right\}.\label{eqn:R-D-MAC}
\end{eqnarray}However, the jammer relay only sends dummy codewords, and does not need to be decoded either at the destination or at the eavesdropper. If the destination cannot decode the jammer relay codeword, it can simply treat it as noise. Thus, all $\xi$ rates in $\mathcal{R}^{[D]}_{\mathrm{N}}$
\begin{eqnarray}
\mathcal{R}^{[D]}_{\mathrm{N}}= \left\{\xi \left|\begin{array}{rcc}
  \xi &\leq & \log\left(1+\frac{\gamma_{SD}}{1+\gamma_{RD}}\right)
\end{array}\right.\right\} \label{eqn:R-D-N}
\end{eqnarray}
are achievable as well. Overall, we say that the destination can decode the source information with arbitrarily small probability of error, if $(\xi,\eta) \in \mathcal{R}^{[D]}$ where
\begin{eqnarray} \mathcal{R}^{[D]} = \mathcal{R}^{{[D]}}_{\mathrm{MAC}} \bigcup \mathcal{R}^{{[D]}}_{\mathrm{N}} \label{eqn:R-D}\end{eqnarray} Note that after taking the union, the individual constraint on $\eta$ in (\ref{eqn:R-D-MAC}) is not needed anymore. We also define two other regions $\mathcal{R}^{{[E]}}_{\mathrm{MAC}}$, $\mathcal{R}^{[E]}_{\mathrm{N}}$, as in (\ref{eqn:R-D-MAC}) and (\ref{eqn:R-D-N}) replacing all $D$ with $E$. Then \begin{eqnarray} \mathcal{R}^{[E]} = \mathcal{R}^{{[E]}}_{\mathrm{MAC}} \bigcup \mathcal{R}^{{[E]}}_{\mathrm{N}} \label{eqn:R-E}\end{eqnarray}
Then for a fixed source and jammer relay rate pair $(\xi,\eta)$, the payoff function, $R_s(\xi,\eta)$, is equal to
\begin{eqnarray}
\lefteqn{R_s(\xi,\eta)} \nonumber\\
&=& \left\{\begin{array}{l}
              0,  \mathrm{~~~~if~}  \left(\begin{array}{l}
                                   (\xi,\eta) \in \mathcal{R}^{[E]}  \mathrm{~or} \\
                                  (\xi,\eta) \not\in \mathcal{R}^{[D]}
                                 \end{array}\right) \\
             \displaystyle{\max_{\nu}}~ (\xi - \nu),  \mathrm{~if~}  \left(\begin{array}{l}
                                              (\xi, \eta) \in \mathcal{R}^{[D]}  \mathrm{~~~~and} \\
                                              (\nu,\eta) \not\in \mathcal{R}^{[E]}  
                                            \end{array}\right)\end{array}
 \right.
 \label{eqn:Rs}
\end{eqnarray}
The proof of how this secrecy rate would be achieved is similar to \cite{TangLiuSpasojevicPoor07_2} and is skipped here.

An example is shown in Fig.~\ref{fig:DifferentCases} for the boundaries of the regions $\mathcal{R}^{[D]}$ and $\mathcal{R}^{[E]}$ with the corner points defined as
\begin{eqnarray}
(\Delta_S,\Delta_R) & = & \left(\log(1+\gamma_{SD}),\log\left(1+\frac{\gamma_{RD}}{1+\gamma_{SD}}\right)\right) \label{eqn:cornerpoints1}\\
(\Omega_S,\Omega_R) & = & \left(\log\left(1+\frac{\gamma_{SD}}{1+\gamma_{RD}}\right), \log(1+\gamma_{RD})\right) \\
(\delta_S,\delta_R) & = & \left(\log(1+\gamma_{SE}), \log\left(1+\frac{\gamma_{RE}}{1+\gamma_{SE}} \right)\right)\allowdisplaybreaks\\
(\omega_S,\omega_R) & = & \left(\log\left(1+\frac{\gamma_{SE}}{1+\gamma_{RE}}\right), \log(1+\gamma_{RE})\right).
\label{eqn:cornerpoints4}
\end{eqnarray}

For a fixed $(\xi,\eta)$, the secrecy rate defined in (\ref{eqn:Rs}) corresponds to the horizontal distance between the point $(\xi,\eta)$  and the dashed line in Fig.~\ref{fig:DifferentCases}, if $(\xi,\eta)$ is in between the solid and dashed lines. If $(\xi,\eta) \in \mathcal{R}^{[E]}$, that is inside the dashed line, then both the destination and the eavesdropper can reliably decode the source information, and the secrecy rate is zero. If $(\xi,\eta) \not\in \mathcal{R}^{[D]}$, outside the solid line, the destination cannot decode the source message reliably. The secrecy rate is zero, because there is no reliable communication between the source and the destination. Because of this immediate drop in secrecy rates beyond the boundary of $\mathcal{R}^{[D]}$, the payoff function is discontinuous.

Note that choosing the dummy information rate as $\eta = \omega_R$ is equivalent to sending unstructured Gaussian noise at the jammer relay. Thus, the secrecy rate achieved by jammer relay sending unstructured Gaussian noise is also covered in our model, although we arrived at the $R_s(\xi,\eta)$ function assuming structured codewords for the jammer relay.

Depending on $\gamma_{kl}$, the positions of the corner points with respect to each other change, and multiple cases arise. In the next subsection we investigate the case, where the conditions
\begin{eqnarray}
 \log(1+\gamma_{SE}+\gamma_{RE}) & \leq & \log(1+\gamma_{SD}+\gamma_{RD}) \label{eqn:cond1}\\
 \delta_S  \leq \Delta_S, &\quad& \omega_S  \leq \Omega_S \leq \delta_S\\
\Delta_R  \leq \delta_R, &\quad& \delta_R  \leq \Omega_R \leq \omega_R \label{eqn:cond5}
\end{eqnarray}
are all satisfied. The case shown in Fig.~\ref{fig:DifferentCases} satisfies all these conditions. We will call this \emph{Case A}. There are 10 other cases B-M shown in Fig. \ref{fig:casea_m} for $\Delta_S > \delta_S$ and $\Omega_S>\omega_S$, and Case N for $\Delta_S \leq \delta_S$ or $\Omega_S\leq \omega_S$. These other cases require similar techniques and are explained in Section~\ref{subsec:OtherCases}.

\subsection{Solution to the Game: Case A}\label{sec:PlayingGame}

A zero-sum game has a pure strategy solution if
\[ \max_{{\xi}} \min_{{\eta}} R_s(\xi,\eta) = \min_{{\eta}} \max_{{\xi}} R_s(\xi,\eta). \] However, if there is no $(\xi,\eta)$ that satisfies this equation, then no pure strategy Nash equilibrium exists, and a mixed strategy solution is needed \cite{Karlin}.

\begin{lemma} When $\gamma_{kl}$, $k = S,R$, $l= D,E$ satisfy the conditions (\ref{eqn:cond1})-(\ref{eqn:cond5}), which define Case A, the two-player zero-sum game does not have a pure strategy solution. \label{thm:Pure}
\end{lemma}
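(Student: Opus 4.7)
The plan is to establish the strict separation $\max_\xi \min_\eta R_s(\xi,\eta) < \min_\eta \max_\xi R_s(\xi,\eta)$, which rules out a pure strategy Nash equilibrium of this zero-sum game. Both bounds will exploit the discontinuity of $R_s$ at the boundary of $\mathcal{R}^{[D]}$ together with the Case~A orderings $\omega_S \leq \Omega_S \leq \delta_S \leq \Delta_S$ and $\Delta_R \leq \delta_R \leq \Omega_R \leq \omega_R$.

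For the maximin, I will argue that the jammer relay can drive the payoff to zero for every source rate $\xi$. If $\xi \leq \delta_S$, then choosing $\eta = 0$ places $(\xi,\eta)$ inside $\mathcal{R}^{[E]}_{\mathrm{MAC}}$ since the individual and sum-rate constraints of $\mathcal{R}^{[E]}_{\mathrm{MAC}}$ are both trivially satisfied, so (\ref{eqn:Rs}) gives $R_s = 0$. If instead $\xi > \delta_S$, then also $\xi > \Omega_S$ by Case~A, hence $(\xi,\eta)\notin\mathcal{R}^{[D]}_{\mathrm{N}}$; the jammer then picks any $\eta > \log(1+\gamma_{SD}+\gamma_{RD}) - \xi$ to violate the destination MAC sum-rate constraint as well, so $(\xi,\eta)\notin \mathcal{R}^{[D]}$ and again $R_s=0$. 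Thus $\min_\eta R_s(\xi,\eta) = 0$ for every $\xi$, and the maximin equals $0$.

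For the minimax, I plan to show that for every $\eta$ some source rate yields strictly positive secrecy. Let $\xi^{[D]}(\eta)$ and $\xi^{[E]}(\eta)$ denote the largest $\xi$ with $(\xi,\eta)\in\mathcal{R}^{[D]}$ and $\mathcal{R}^{[E]}$ respectively; by (\ref{eqn:Rs}) it suffices to verify $\xi^{[D]}(\eta) > \xi^{[E]}(\eta)$ for every $\eta$, since any $\xi$ in the strip $(\xi^{[E]}(\eta),\xi^{[D]}(\eta)]$ delivers $R_s = \xi - \xi^{[E]}(\eta) > 0$. I will cover three subranges. For $\eta \leq \Delta_R$, both envelopes lie on the individual-rate facets and the gap is $\Delta_S - \delta_S > 0$. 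For $\eta \geq \omega_R$, the eavesdropper region collapses to $\mathcal{R}^{[E]}_{\mathrm{N}}$ with $\xi^{[E]}(\eta) = \omega_S$, while $\xi^{[D]}(\eta) = \Omega_S$, and the gap is $\Omega_S - \omega_S > 0$. For intermediate $\eta$, the two envelopes are set by their respective sum-rate facets (or by $\Omega_S$ once $\eta > \Omega_R$), and a short piecewise calculation using (\ref{eqn:cond1}) gives
\[ \xi^{[D]}(\eta) - \xi^{[E]}(\eta) \geq \log(1+\gamma_{SD}+\gamma_{RD}) - \log(1+\gamma_{SE}+\gamma_{RE}) > 0. \]

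The main obstacle will be this intermediate range $\eta \in (\Delta_R,\omega_R)$, where one must track which facet of each region is active on the subintervals $(\Delta_R,\delta_R]$, $(\delta_R,\Omega_R]$, $(\Omega_R,\omega_R]$ and verify the gap piecewise; the Case~A orderings (\ref{eqn:cond1})--(\ref{eqn:cond5}) are precisely what keep the two envelopes from crossing. Once both bounds are in hand, the strict inequality $0 = \max_\xi \min_\eta R_s < \min_\eta \max_\xi R_s$ precludes a pure strategy Nash equilibrium: any candidate $(\xi^*,\eta^*)$ yielding $R_s=0$ admits a source deviation to a strictly positive rate, while any candidate yielding $R_s>0$ admits a jammer deviation back to zero, proving the lemma.
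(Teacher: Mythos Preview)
Your approach is correct and coincides with the paper's: both establish that $\max_\xi \min_\eta R_s(\xi,\eta)=0$ while $\min_\eta \max_\xi R_s(\xi,\eta)>0$, so no saddle point exists. The paper simply asserts the two values (stating the minimax equals $\log(1+\gamma_{SD}+\gamma_{RD})-\log(1+\gamma_{SE}+\gamma_{RE})$ without a detailed derivation), whereas you spell out the jammer's zero-forcing response and the piecewise gap analysis that justify those values; your more careful case split is exactly the calculation the paper leaves implicit.
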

\begin{proof}In this game
$\max_{{\xi}} \min_{{\eta}} R_s(\xi,\eta) = 0,$ whereas $\min_{{\eta}} \max_{{\xi}} R_s(\xi,\eta) = \log(1+\gamma_{SD}+\gamma_{RD})- \log(1+\gamma_{SE}+\gamma_{RE}).$ These two values are not the same, hence a pure strategy solution does not exist.
\end{proof}

\begin{lemma}\label{lemma:2} The game defined in Lemma~\ref{thm:Pure}, is equivalent to a continuous game played over the square, where the source and jammer relay strategies are respectively restricted to the compact intervals $\xi \in [\Omega_S,\Omega_S+L]$ and $\eta \in [\delta_R,\delta_R +L]$, where $L $ is the edge length $L =  \Omega_R - \delta_R$. \label{thm:GameReduction}
\end{lemma}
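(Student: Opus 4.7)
The plan is to prove the claim by iterated elimination of weakly dominated strategies; since the game is two-player zero-sum, this operation preserves the value, so equivalence of the restricted and unrestricted games will follow. The payoff $R_s(\xi,\eta)$ in (\ref{eqn:Rs}) is determined entirely by which side of the boundaries of $\mathcal{R}^{[D]}$ and $\mathcal{R}^{[E]}$ the pair $(\xi,\eta)$ sits on (Fig.~\ref{fig:DifferentCases}), so each domination test will reduce to a geometric case analysis, where the orderings $\omega_S\le\Omega_S\le\delta_S\le\Delta_S$ and $\Delta_R\le\delta_R\le\Omega_R\le\omega_R$ guaranteed by the Case A conditions (\ref{eqn:cond1})--(\ref{eqn:cond5}) keep the bookkeeping clean.

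First I would show that any source rate $\xi'<\Omega_S$ is weakly dominated by $\xi=\Omega_S$: both rates place the pair in $\mathcal{R}^{[D]}_{\mathrm{N}}$ so the destination always decodes, and a short split on whether $(\Omega_S,\eta)\in\mathcal{R}^{[E]}$ yields $R_s(\Omega_S,\eta)\ge R_s(\xi',\eta)$ pointwise. Next I would confine the jammer to $[\delta_R,\Omega_R]$ by showing that $\eta=\delta_R$ dominates any $\eta'<\delta_R$. Partitioning the source's rate into the four intervals $\xi\le\delta_S$, $\delta_S<\xi\le\Omega_S+L$, $\Omega_S+L<\xi\le\Delta_S$, and $\xi>\Delta_S$, the first, second, and fourth intervals give $R_s(\xi,\delta_R)=R_s(\xi,\eta')$ by direct inspection, while on the third interval $\eta=\delta_R$ violates the destination MAC sum-rate so $R_s(\xi,\delta_R)=0$ whereas $R_s(\xi,\eta')$ remains strictly positive. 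The symmetric argument, now using $\omega_R\ge\Omega_R$ and the identity $\omega_S+\omega_R=\delta_S+\delta_R=\log(1+\gamma_{SE}+\gamma_{RE})$, shows that $\eta=\Omega_R$ weakly dominates every $\eta'>\Omega_R$.

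Finally, once the jammer is restricted to $[\delta_R,\Omega_R]$, the source upper bound is immediate: for $\xi'>\Omega_S+L=\Omega_S+\Omega_R-\delta_R$ and any $\eta\ge\delta_R$, the inequality $\xi'+\eta>\Omega_S+\Omega_R$ rules out $\mathcal{R}^{[D]}_{\mathrm{MAC}}$ while $\xi'>\Omega_S$ rules out $\mathcal{R}^{[D]}_{\mathrm{N}}$, giving $R_s(\xi',\eta)=0$; on the other hand $\xi=\Omega_S+L$ achieves the strictly positive value $\log(1+\gamma_{SD}+\gamma_{RD})-\log(1+\gamma_{SE}+\gamma_{RE})>0$ against $\eta=\delta_R$ by (\ref{eqn:cond1}), so $\xi'>\Omega_S+L$ is also weakly dominated. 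The three reductions together leave precisely the square $[\Omega_S,\Omega_S+L]\times[\delta_R,\delta_R+L]$. I expect the main obstacle to be the bookkeeping in the second step: because $R_s$ is discontinuous across the boundaries of $\mathcal{R}^{[D]}$ and $\mathcal{R}^{[E]}$, one must verify in every sub-case that the proposed dominator never crosses such a boundary in a direction that would flip the inequality and secretly help the source.
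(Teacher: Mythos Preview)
Your proposal is correct and follows essentially the same route as the paper: iterated elimination of weakly dominated strategies, first trimming the source interval from below, then confining the jammer to $[\delta_R,\Omega_R]$, and finally trimming the source from above once the jammer is restricted. The only cosmetic difference is ordering: the paper inserts an intermediate step eliminating $\xi>\Delta_S$ before handling the jammer, whereas you absorb that into your fourth sub-interval; your more explicit four-interval case split for the jammer reduction is a slight expansion of the paper's terse inequality statements but adds no new idea.
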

\begin{proof}To prove this we eliminate comparable and inferior strategies for the source and the jammer relay. First note that
\[ R_s(\xi,\eta) \leq  R_s(\Omega_S,\eta), \mathrm{~for~}\xi<\Omega_S, \mathrm{~and~}0 < \eta. \] In other words, as both players are rational, the source never chooses rates $\xi < \Omega_S$. Similarly, the source node never chooses its rate larger than $\Delta_S$, as the secrecy rate $R_s(\xi,\eta) = 0$ no matter what the jammer relay action is. On the other hand,
\[ R_s(\xi,\eta) \geq R_s(\xi, \Omega_R),\mathrm{~for~} \Omega_S \leq \xi \leq \Delta_S, \mathrm{~and~}\eta>\Omega_R. \] For the jammer relay, choosing any rate larger than $\Omega_R$ is inferior to choosing rate equal to $\Omega_R$ and thus we can omit the strategies $\eta > \Omega_R$. Similarly, \[ R_s(\xi,\eta) \geq R_s(\xi, \delta_R),\mathrm{~for~}\Omega_S \leq \xi \leq \Delta_S, \mathrm{~and~}\eta<\delta_R. \] The jammer relay strategies $\eta <\delta_R$ are inferior to $\eta = \delta_R$ and thus the jammer relay never chooses its rate less than $\delta_R$. Finally, in this reduced game, the source node does not choose its rate larger than $\log(1+\gamma_{SD}+\gamma_{RD})-\delta_R$, as this choice makes its payoff equal to zero. In other words,
\begin{eqnarray*}
 R_s(\xi ,\eta) &\leq &  R_s(\log(1+\gamma_{SD}+\gamma_{RD})-\delta_R,\eta), \end{eqnarray*} for $\xi > \log(1+\gamma_{SD}+\gamma_{RD})-\delta_R$ and $\delta_R<\eta<\Omega_R$. These strategy eliminations result in the desired reduced game.
\end{proof} The reduced game based on Lemma~\ref{thm:GameReduction} is shown in Fig.~\ref{fig:ReducedCaseA}.

We next describe how to solve for the value of this reduced game.
\begin{theorem}\label{thm:Solution} Let $a$ be defined as $a  = (\delta_S-\Omega_S)/{L}.$ Suppose $a \in [k/(k+1),(k+1)/(k+2)]$, for some integer $k\geq 0$. Then the equivalent game in Lemma~\ref{thm:GameReduction} has the Nash equilibrium secrecy rate $R_s^* = L\alpha (1-a)$, where $\alpha = g_k(a)$, and is achieved with cdfs for the source and the jammer relay $F_{\xi}(\xi)$ and $F_{\eta}(\eta)$, respectively. Here the functional forms of $\alpha$, $F_{\xi}(\xi)$ and $F_{\eta}(\eta)$ depend\footnote{The dependency of the cdfs on $k$ is not explicitly shown for notational convenience.} on $k$, and for a given $k$, both $\alpha$ and the cdfs can be readily computed. For example, for $0 \leq a \leq 1/2$ or equivalently $k=0$, we have $\alpha = g_0(a) = \frac{e^{-1/(1-a)}}{1-\frac{a}{1-a}e^{-1}}$ and
\begin{equation}
F_\xi(\xi) = \left\{
\begin{array}{l }
\alpha e^{\frac{\xi-\Omega_S}{L(1-a)}} ~~~~~~~~~~ \Omega_S \leq \xi \leq \Omega_S + L(1-a)\\
\alpha \left[(1+e^{-1})e^{\frac{\xi-\Omega_S}{L(1-a)}}-\frac{1}{1-a}(\frac{\xi-\Omega_S}{L}) e^{\frac{\xi-\Omega_S}{L(1-a)}-1}\right] \nonumber \\~~~~~~~~~~~~~~~~~~~~~\Omega_S+ L(1-a)\leq \xi \leq \Omega_S + L
\end{array}
\right. \label{eqn:cdf0-a-0.5}
\end{equation}

The optimal cdf for the jammer relay $F_\eta(\eta)$ is the same as $F_\xi(\xi)$ if $\xi$ and $\Omega_S$ are replaced with $\eta$ and $\delta_R$ respectively.
\end{theorem}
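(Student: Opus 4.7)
The plan is to recast the reduced game of Lemma~\ref{thm:GameReduction} in a scale-free canonical form, guess the mixed equilibrium via the indifference principle, and verify.

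First, I would rescale by setting $u=(\xi-\Omega_S)/L$ and $v=(\eta-\delta_R)/L$. A direct computation using the corner-point identities $\Omega_S+\Omega_R=\Delta_S+\Delta_R=\log(1+\gamma_{SD}+\gamma_{RD})$ and $\delta_S+\delta_R=\omega_S+\omega_R=\log(1+\gamma_{SE}+\gamma_{RE})$, together with the Case A inequalities (\ref{eqn:cond1})--(\ref{eqn:cond5}), shows that inside the $L\times L$ square of the reduced game the only active MAC boundaries are the two sum-rate lines, so the payoff reduces to
\begin{equation*}
R_s(\xi,\eta)=L\cdot P(u,v),\qquad P(u,v)=(u+v-a)\,\mathbf{1}\{a\le u+v\le 1\}.
\end{equation*}
This payoff is continuous on the triangular band $a\le u+v\le 1$, vanishes outside, and has its only discontinuity across the upper diagonal $u+v=1$. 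Since $P(u,v)=P(v,u)$, the game is symmetric, and I would accordingly look for symmetric optimal cdfs --- which matches the theorem's assertion that $F_\eta$ is just $F_\xi$ with the support shifted from $[\Omega_S,\Omega_S+L]$ to $[\delta_R,\delta_R+L]$.

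Next I would derive the cdf via the indifference principle. Provisionally assume the jammer uses a cdf $G$ on $[0,1]$ with atom $\alpha=G(0)$ and density $g$ on $(0,1)$. The source's expected payoff at pure action $u$ is
\begin{equation*}
\Pi_S(u)=\alpha\,(u-a)\,\mathbf{1}\{u\ge a\}+\int_{(a-u)^{+}}^{1-u}(u+v-a)\,g(v)\,dv,
\end{equation*}
and for $F$ to be a best response supported on all of $[0,1]$, $\Pi_S(u)$ must be constant in $u$. Differentiating by Leibniz' rule in the regimes $u\in[0,a]$ and $u\in[a,1]$, the atom contribution cancels cleanly, and setting $z=1-u$ yields the functional equations
\begin{align*}
(1-a)\,g(z)&=G(z), & z&\in[0,1-a],\\
(1-a)\,g(z)&=G(z)-G\bigl(z-(1-a)\bigr), & z&\in[1-a,1].
\end{align*}
The first forces pure exponential growth $G(z)=\alpha\,e^{z/(1-a)}$ on the leftmost piece --- a nonzero $\alpha$ is necessary or else $G\equiv 0$. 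The second is a linear delay ODE whose forcing term is $G$ read off from the previously solved piece; variation of parameters multiplies the exponential by an additional polynomial degree at each iteration. The hypothesis $a\in[k/(k+1),(k+1)/(k+2)]$ is equivalent to $1/(1-a)\in[k+1,k+2]$, so the breakpoints $j(1-a)$, $j=1,\dots,k$, all lie in $(0,1)$ and partition $[0,1]$ into exactly $k+1$ pieces on which $G$ is a degree-$j$ polynomial times $e^{z/(1-a)}$.

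Finally I would close the system and verify. Continuity of $G$ at the $k$ internal breakpoints together with the normalization $G(1)=1$ pins down every integration constant and leaves $\alpha$ determined by a single closed-form expression $\alpha=g_k(a)$; for $k=0$ the resulting one-ODE, one-continuity-match, one-normalization system reproduces (\ref{eqn:cdf0-a-0.5}) and $\alpha=e^{-1/(1-a)}/(1-\frac{a}{1-a}e^{-1})$. To compute the value I would evaluate $\Pi_S$ at a convenient point --- $u=0$, where the atom does not contribute --- and collapse the resulting integral via the functional equation to get $V=(1-a)\alpha$, hence $R_s^*=L(1-a)\alpha$. By the game's symmetry, $F=G$ is a best response to itself, so the pair is a Nash equilibrium with value $R_s^*$. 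The main obstacle I foresee is not any single step but the bookkeeping for general $k$: one must show that the piecewise recursion is consistent, that the $k$ continuity matches exactly soak up the $k$ new integration constants (leaving $\alpha$ free for the normalization), that the resulting $G$ is monotone so that $g\ge 0$, and that the candidate really has full support on $[0,1]$ --- ruling out alternative equilibria with atoms at interior points or excluded subintervals. For $k=0$ all of this is automatic and the calculation is a single ODE solve; for larger $k$ this bookkeeping is the delicate part of the proof.
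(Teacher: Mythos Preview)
Your proposal is correct and takes essentially the same approach as the paper's Appendix~A: normalize to the unit square, posit an atom-plus-density form for the optimal strategies, impose the indifference condition, solve the resulting piecewise (delay) ODE, and pin down the constants by continuity and the normalization $G(1)=1$. The only cosmetic differences are that the paper changes variables to $\lambda=1-\eta$ and cites Karlin's games of timing for the structural ansatz, whereas you work directly in $v=(\eta-\delta_R)/L$ and invoke the symmetry $P(u,v)=P(v,u)$ to conclude $F_\xi=F_\eta$ without a second derivation; incidentally, evaluating $\Pi_S$ at $u=1$ rather than $u=0$ gives the value $(1-a)\alpha$ instantly, since only the jammer's atom at $v=0$ survives there.
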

\begin{proof}
See Appendix A.
\end{proof}

Since there are infinitely many intervals for $a$ in Theorem~\ref{thm:Solution} (corresponding to each nonnegative integer $k$), it is important to find a practical way to calculate the value of the game.
\begin{theorem}\label{thm:Discrete} Consider a discrete approximation of the equivalent game in Lemma~\ref{thm:GameReduction} obtained by dividing the square into a uniform grid of $(T+1)^2$ samples. The discrete source strategies are $\xi_i = \omega_S + Li/T$, the relay strategies are $\eta_j = \delta_R + Lj/T$, and the payoff matrix is $A = [a_{ij}]$, where $a_{ij} = R_s(\xi_i,\eta_j)$, $i,j = 0,1,...,T$. The value of this discrete game can be obtained using linear programming. Furthermore, for a chosen T, difference between values of the discrete and the continuous game is at most $2\sqrt{2}L/T$.
\end{theorem}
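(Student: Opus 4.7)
\textbf{Proof plan for Theorem~\ref{thm:Discrete}.}

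The plan is to handle the two claims in turn: first, that the value $V_T$ of the discrete game is the solution of a linear program; second, that $|V-V_T|\leq 2\sqrt{2}L/T$, where $V$ is the Nash value from Theorem~\ref{thm:Solution}.

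For the first claim, the discrete game with payoff matrix $A=[a_{ij}]$, $a_{ij}=R_s(\xi_i,\eta_j)$, is a finite two-player zero-sum matrix game. By von Neumann's minimax theorem (and standard LP duality, as in Karlin) it has a value $V_T$ computable from the pair of dual LPs
\[
V_T=\max_{p,v}\{v:\ p^{T}A\geq v\mathbf{1}^{T},\ p\geq 0,\ \mathbf{1}^{T}p=1\}
=\min_{q,w}\{w:\ Aq\leq w\mathbf{1},\ q\geq 0,\ \mathbf{1}^{T}q=1\}.
\]
I would simply cite this and state the LP formulation.

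For the approximation bound, I would first isolate the key regularity of the payoff. On the band where $R_s>0$ (the region between the boundaries of $\mathcal{R}^{[D]}$ and $\mathcal{R}^{[E]}$, as in Fig.~\ref{fig:DifferentCases}) the maximization in~(\ref{eqn:Rs}) is explicit: $R_s$ is piecewise affine with slope $1$ in $\xi$ and slope in $\eta$ equal to $0$ or $1$ depending on whether the active eavesdropper constraint is the single-user bound or the MAC sum-rate bound. Hence, in the continuity region, $R_s$ is $1$-Lipschitz in each coordinate separately. Outside the band $R_s\equiv 0$. Granted this, the standard ``approximate the optimal continuous mixed strategy by a discrete one on the grid'' argument applies: if $F_\xi^{*}$, $F_\eta^{*}$ are the optimal cdfs from Theorem~\ref{thm:Solution}, one defines discrete cdfs $\tilde p$, $\tilde q$ on the grid by transporting each increment of mass to the nearest grid point. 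Each such transport moves a point by at most $L/(2T)$ in each coordinate, i.e.\ by at most $\sqrt{2}L/(2T)$ in Euclidean distance. Combining the per-coordinate Lipschitz bound on the active region with this displacement gives, for any discrete mixed strategy $q$ on the grid,
\[
\bigl|E_{\tilde p\otimes q}[R_s]-E_{F_\xi^{*}\otimes q}[R_s]\bigr|\leq \sqrt{2}L/T,
\]
and symmetrically for shifts on the jammer side. Sandwiching $V_T$ between $\sup_{p\text{ discrete}}\inf_{F_\eta}E[R_s]$ and $\sup_{F_\xi}\inf_{q\text{ discrete}}E[R_s]$ (which also sandwich $V$) and applying the displacement bound to both players' discretizations yields the desired uniform estimate $|V-V_T|\leq 2\sqrt{2}L/T$.

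The main obstacle is the discontinuity of $R_s$ along the outer boundary of $\mathcal{R}^{[D]}$, where the payoff drops from a strictly positive value to $0$: a naive nearest-neighbor transport could push mass in the active band across this boundary and annihilate its contribution. I would handle this by biasing the discretization \emph{inward}: whenever the nearest grid point lies across the $\mathcal{R}^{[D]}$-boundary from the original point, transport the mass instead to the nearest grid point on the interior side. This still moves the point by at most $\sqrt{2}L/T$ in Euclidean norm (the cell diagonal), which is exactly the reason a factor of $\sqrt{2}$ rather than the ``half-diagonal'' $\sqrt{2}/2$ enters the final bound. Since the optimal cdfs in Theorem~\ref{thm:Solution} have densities with no atoms on the measure-zero boundary curve itself, the inward bias incurs only displacement cost, not a jump, and the Lipschitz estimate inside the band still controls the induced change in expected payoff. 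Combining the inward-biased transport on both sides gives the stated $2\sqrt{2}L/T$ bound, completing the proof.
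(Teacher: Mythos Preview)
Your LP reduction for the first claim is fine and matches the paper. Your route to the approximation bound, however, is genuinely different from the paper's. The paper does not fight the discontinuity head-on: it first makes the substitution $\lambda=1-\eta$, which maps the sum-rate boundary of $\mathcal{R}^{[D]}$ exactly onto the diagonal $\xi=\lambda$ of the unit square. The kernel then splits into $M_1(\xi,\lambda)$ on the upper triangle $\Sigma_1$ (Lipschitz with constant $K_1=\sqrt{2}$) and $M_2\equiv 0$ on the lower triangle $\Sigma_2$ (so $K_2=0$), and the paper invokes a black-box result of Sz\'ep--Forg\'o for games whose kernel is Lipschitz on each side of the diagonal: that theorem gives $\epsilon$-optimality whenever $T\geq 2\max(K_1,K_2)/\epsilon$, i.e.\ the bound $2\sqrt{2}/T$ (scaled by $L$). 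The change of variable is what makes the discontinuity disappear into the framework; no ad-hoc handling is needed.

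Your direct transport-plus-Lipschitz argument can be made to work, but as written the ``inward bias'' step is not well posed. You are discretizing the one-dimensional marginals $F_\xi^{*}$ and $F_\eta^{*}$, yet whether rounding a given $\xi$ ``crosses the $\mathcal{R}^{[D]}$-boundary'' depends on the opponent's $\eta$, which the source does not see when choosing $\tilde p$. (Your Euclidean bookkeeping also mixes 1D transport with 2D displacement; only one coordinate moves in each discretization.) The clean fix exploits the monotone structure of the reduced payoff $R_s(\xi,\eta)=[\xi+\eta-\Omega_S-\delta_R-aL]^+\cdot \mathbf{1}\{\xi+\eta\le \Omega_S+\delta_R+L\}$: for every fixed $\eta$, $R_s(\lfloor\xi\rfloor_{\mathrm{grid}},\eta)\geq R_s(\xi,\eta)-L/T$, so the source should \emph{always} round $\xi$ down; symmetrically the jammer rounds $\eta$ up. This uniform rule avoids the discontinuity for every opponent action and in fact yields $|V-V_T|\leq L/T$, sharper than the paper's $2\sqrt{2}L/T$. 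So your approach buys an elementary, self-contained argument (and a better constant) once the rounding direction is fixed; the paper's buys brevity by outsourcing the analysis to a known theorem on games of timing.
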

\begin{proof}
See Appendix B.
\end{proof}

To compare the optimal strategies for the continuous and discrete games using the solutions in Theorems~\ref{thm:Solution} and \ref{thm:Discrete}, we assume $|h_{SD}| = 1$, $|h_{RD}| = 1/2$, and $|h_{SE}| = |h_{RE}| = 2/3$. We make no assumptions on the phases of the channel gains as $F_{\xi}(\xi)$, $F_{\eta}(\eta)$ and $R_s^*$ only depend on the magnitude of the channel gains. The source and the jammer relay power constraints are $P_S = P_R = 10$. We then have $a = 0.5255$, $L = 0.946$ and $\alpha = 0.20484$. We find the Nash equilibrium secrecy rate of the continuous game as 0.092 bits/channel use. To use the discrete approximation we set $T = 400$. Choosing this sample size, Theorem \ref{thm:Discrete} states that the difference between the value of discrete and continuous games is at most $0.007$. Yet, the actual difference is much smaller and we find that the value of the discrete game as 0.0923 bits/channel use. Note that these values are much smaller than the no jammer relay case, for which the secrecy rate is equal to 1.0146 bits/channel use. The optimal cdfs for the source for the continuous and discrete games are very close to each other and are shown in Fig.~\ref{fig:CDFcaseAa1}. As argued in Lemma \ref{thm:GameReduction}, we observe that $F_{\xi}(\xi)$ is zero if $\xi < 1.947 = \Omega_S$, and is 1 if $\xi > 2.893= \Omega_S+L$. Note that in the reduced game, sending Gaussian noise with full power is still one of jammer relay's possible strategies.

\subsection{Note on the Achievability of the Mixed Strategy Solution}\label{subsec:Achievability}

If the solution of the game is a pure strategy, the achievability follows using the arguments in Section~\ref{subsec:GameTheoryForm}. As the solution in Theorem~\ref{thm:Solution} is mixed, it is also important to explain how the Nash equilibrium is attained information theoretically.

In a mixed strategy the players randomize their actions over a set of strategies with a certain probability distribution. The players act repeatedly and ignore any strategic link that may exist between plays. They also know each other's probability distribution functions, and hence formulate their own actions. In the game defined in this section, when a mixed strategy solution is needed, the source node assumes a variable rate scheme, similar to the one adopted for fading eavesdropper channels~\cite{GopalaLEG06}.

In this variable rate scheme, the source generates a total of
$2^{nB {E}(\xi)}$ codewords, where $B$ is the number of blocks the game is played where each block is of length $n$, and $E(\xi)$ is the expected rate for the source node, expectation calculated over the joint cdf $F_{\xi}(\xi)F_{\eta}(\eta)$.
The source uses these codewords to form a secure code that conveys $nBR_s^*$ bits of information in $B$ blocks \cite{Wyner75}, where $R_s^*$ is the value of the game or the Nash secrecy rate. In each block, the source independently chooses a rate $\xi$ according to $F_{\xi}(\xi)$ and transmits $n\xi$ bits of the codeword chosen to represent the secure information. Similarly, the jammer relay chooses a rate $\eta$ according to $F_{\eta}(\eta)$. Since the eavesdropper cannot improve its mutual information more than $\xi$, as in the variable rate case of \cite{GopalaLEG06}, (\ref{eqn:Rs}) is still valid and  $R_s^*$ is attained as both $n$ and $B$ approach infinity.

\subsection{Other Cases: B-N}\label{subsec:OtherCases}

So far in Sections \ref{subsec:GameTheoryForm}/\ref{sec:PlayingGame}, we have obtained a complete solution for Case A. In this subsection, we will show that for all other cases B to N, shown in Fig. \ref{fig:casea_m}, either an analytical solution or a discrete approximation can be found. Due to limited space, the coordinates of corner points of the equivalent regions are not shown in the subfigures of Fig. \ref{fig:casea_m} but they can be easily determined given those coordinates of corner points on the boundary regions as in (\ref{eqn:cornerpoints1})-(\ref{eqn:cornerpoints4}).

For Cases B and C, following a reasoning similar to Lemma \ref{thm:GameReduction} the reduced region remains to be a square as in Case A except that the jammer relay rate $\eta$ in interval $[\omega_R,\Omega_R)$ is dominated by the jammer relay rate $\Omega_R$. In this case, in order to keep the support of the jammer relay strategy compact, we do not eliminate these dominated jammer relay rates. Using the same reasoning as in the proof of Theorem \ref{thm:Discrete}, we can solve the problem by approximating the original game with a discretized matrix game.

For Case D, the solution is the same as Case A when $a =0$ and the optimal solution can be obtained analytically using Theorem~\ref{thm:Solution}. In this case, the Nash equilibrium secrecy rate $R_s^* = L\alpha + (\Omega_S - \delta_S)$, where $L = \Omega_R - \delta_R$ and $\alpha = e^{-1}$.

For Case E, a pure strategy Nash equilibrium exists, the optimal strategies for the source and the jammer relay are $(\Omega_S,\Omega_R)$ and the Nash equilibrium secrecy rate is $\Omega_S - \delta_S$.

For Cases F-K, if we retain some dominated relay rates to keep the support of the relay strategy compact, the reduced region becomes a rectangle as shown in Fig. ~\ref{fig:casea_m}. We can easily extend Theorem 1 to the case of a rectangle and show that these games also have a value. Furthermore, similar to Theorem \ref{thm:Discrete}, a discrete approximation can be computed.

For Case L, after all eliminations similar to Lemma~\ref{thm:GameReduction}, only four points remain. The two rates for the source to choose are $\xi = \Omega_S$ and $\xi =\Delta_S$ while those for the jammer relay are $\eta= \delta_R$ and $\eta = \Omega_R$. The $2\times 2$ matrix game with the source and the jammer relay being the row and column players respectively has the following payoff matrix
\begin{equation}
\left(\begin{matrix}0 & \Omega_S - \omega_S\\ \Delta_S - \delta_S & 0 \end{matrix}\right).
\end{equation}
A mixed-strategy Nash equilibrium exists, in which the source chooses $\xi = \Omega_S $ with probability $\frac{\Delta_S - \delta_S}{\Omega_S - \omega_S + \Delta_S - \delta_S}$ and $\xi = \Delta_S$ with probability $\frac{\Omega_S - \omega_S}{\Omega_S - \omega_S + \Delta_S - \delta_S}$ while the jammer relay chooses $\eta = \delta_R$ with probability $\frac{\Omega_S - \omega_S}{\Omega_S - \omega_S + \Delta_S - \delta_S}$ and $\eta = \Omega_R$ with probability $\frac{\Delta_S - \delta_S}{\Omega_S - \omega_S + \Delta_S - \delta_S}$. The Nash equilibrium secrecy rate is given by $R_s^* = \frac{(\Omega_S - \omega_S)(\Delta_S - \delta_S)}{\Omega_S - \omega_S + \Delta_S - \delta_S}.$

Case M is very similar to Case L. As in Case L, the source can choose between $\xi = \Omega_S$ and $\xi =\Delta_S$ while the jammer relay can choose between $\eta= \delta_R$ and $\eta = \Omega_R$. The $2\times 2$ matrix game with the source and the jammer relay being the row and column players respectively has the following payoff matrix
\begin{equation}
\left(\begin{matrix}\Omega_S-\delta_S & \Omega_S - \omega_S\\ \Delta_S - \delta_S & 0 \end{matrix}\right).
\end{equation}
A mixed-strategy Nash equilibrium also exists, in which the source chooses $\xi = \Omega_S $ with probability $\frac{\Delta_S - \delta_S}{\Delta_S - \omega_S }$ and $\xi = \Delta_S$ with probability $\frac{\delta_S-\omega_S}{\Delta_S - \omega_S}$ while the jammer relay chooses $\eta = \delta_R$ with probability $\frac{\Omega_S - \omega_S}{\Delta_S - \omega_S}$ and $\eta = \Omega_R$ with probability $\frac{\Delta_S - \Omega_S}{\Delta_S - \omega_S}$. The Nash equilibrium secrecy rate is given by $R_s^* = \frac{(\Omega_S - \omega_S)(\Delta_S - \delta_S)}{\Delta_S - \omega_S}.$

Case N is the case when boundaries of the regions $\mathcal{R}^{[D]}$ and $\mathcal{R}^{[E]}$ intersect or $\mathcal{R}^{[D]}$ is contained in $\mathcal{R}^{[E]}$. In the former scenario, the intersection point is a pure strategy Nash equilibrium leading to zero secrecy rate; in the latter case, no positive secrecy rate can be achieved regardless of what the source and jammer relay strategies are.

\subsection{Unknown Jammer Relay Codebook}\label{subsec:unknownJRcodebook}

In previous subsections, it is assumed that all nodes in the system know both the source and the jammer relay strategies, including the codebooks. In practice, it is reasonable to assume that the destination is not aware of the jammer relay's exact codebook even though it may have the knowledge that the jammer relay uses Gaussian codebooks. Under this assumption, the destination cannot jointly decode the source and the jammer relay messages any more and thus the jammer relay's signal would always be treated as noise at the destination. Hence in this case, the destination can correctly decode source information only if $\xi \in \mathcal{R}^{[D]}_{\mathrm{N}}$, where $\mathcal{R}^{[D]}_{\mathrm{N}}$ is given in (\ref{eqn:R-D-N}). Redrawing the rate regions under the new assumption, we obtain the two cases shown in Fig. \ref{fig:region_combined} for $\Omega_S \leq \delta_S$ and for  $\Omega_S > \delta_S$. It can be easily shown that if $\Omega_S \leq \delta_S$, the intersection point, $(\xi, \eta) = (\Omega_S, \frac{(\omega_R-\delta_R)\Omega_S+\omega_S\delta_R-\omega_R\delta_S}{\omega_S-\delta_S})$, is a pure Nash equilibrium point and is unique, leading to zero secrecy rate. If $\Omega_S > \delta_S$ instead, all points $(\Omega_S, \eta)$ with $\eta \in [0,\delta_R]$ are Nash equilibria, and the resulting Nash equilibrium secrecy rate is $\Omega_S - \delta_S$. Therefore, we conclude that when the destination does not know the jammer relay codebook, the Nash equilibrium secrecy rate is $R_s = (\Omega_S - \delta_S)^+$, where $[x]^+ = \max(0,x)$.

\section{Problem 2: The Relay is Given the Source Signal}\label{sec:Problem2}

In this section we provide the Nash equilibrium solution of the game for the second problem in which the jammer relay is given the source signal causally. To further simplify the setup we assume the destination is unaware of the jammer relay codebook as in Section~\ref{subsec:unknownJRcodebook}. Depending on whether $Z_1^n$ is Gaussian noise or structured codeword in (\ref{eqn:JRstrategy}), the payoff function and the Nash equilibrium of the game are different. In the following, we consider these two different scenarios respectively.

\subsection{$Z_1^n$ is Gaussian noise}\label{subsec:Problem2noise}

In this subsection, we assume that $Z_i$ in (\ref{eqn:JRstrategy}) are i.i.d. complex Gaussian with zero mean and variance $N_Z$, for all $i=1,...,n$. Substituting $X_{R,i}$ in (\ref{eqn:JRstrategy}) into (\ref{eqn:YD}) and (\ref{eqn:YE}), the relation between the source signal and the received signals at the destination and the eavesdropper can equivalently be written as
\begin{eqnarray}
\tilde{Y}_{D,i} = X_{S,i} + \tilde{Z}_{D,i} \label{eqn:YDtilde}\\
\tilde{Y}_{E,i} = X_{S,i} + \tilde{Z}_{E,i}\label{eqn:YEtilde}
\end{eqnarray}
where $\tilde{Z}_{D,i} = \frac{h_{RD}Z_i+Z_{D,i}}{h_{SD}+h_{RD}\rho}$ and $\tilde{Z}_{E,i} = \frac{h_{RE}Z_i+Z_{E,i}}{h_{SE}+h_{RE}\rho}$. Since $Z_D,\: Z_E \sim \mathcal{CN}(0,N_0)$, the variances of $\tilde{Z}_D$ and $\tilde{Z}_E$ can respectively be obtained as
\begin{eqnarray}
\tilde{N}_D = \frac{|h_{RD}|^2 N_Z + N_0}{|h_{SD}+h_{RD}\rho|^2}, \label{eqn:tildeND}\\
\tilde{N}_E = \frac{|h_{RE}|^2 N_Z + N_0}{|h_{SE}+h_{RE}\rho|^2}. \label{eqn:tildeNE}
\end{eqnarray}

The system in (\ref{eqn:YDtilde}) and (\ref{eqn:YEtilde}) is equivalent to a Gaussian wire-tap channel. Hence the best strategy for the source is to choose its codebook according to i.i.d. $\mathcal{CN}(0, P_S)$, resulting in the secrecy capacity $R_s$ \cite{LeungYCHellman78}
\begin{equation}
R_s(\rho,N_Z) = \left[\log_2\left(1+\frac{P_S}{\tilde{N}_D}\right) - \log_2\left(1+\frac{P_S}{\tilde{N}_E}\right)\right]^+.
\end{equation}
On the other hand, if the source fixes its input distribution to be complex Gaussian with zero mean and variance $P_S$, the jammer relay would intend to choose $\rho$ and $N_Z$ to minimize the secrecy rate when its strategy is limited to (\ref{eqn:JRstrategy}). Thus, the pure-strategy achieving the Nash equilibrium for the source is to choose the input distribution $X_S \sim \mathcal{CN}(0,P_S)$ and for the jammer relay is to transmit $X_R = \rho X_S + Z$ for some optimal $\rho$ and $N_Z$, where the optimal $\rho$ and $N_Z$ are denoted as $\rho^*$ and $N_Z^*$, are solutions to the problem
\begin{equation}
\min R_s(\rho,N_Z) \quad \text{subject to } |\rho|^2 P_S + N_Z \leq P_{R}. \label{eqn:OptProb}
\end{equation}
Then the Nash equilibrium secrecy rate is $R_s^*(\rho^*,N_Z^*)$.

The optimization problem in (\ref{eqn:OptProb}) is not convex. This suggests infeasibility of a closed form solution in general. However, for the following special cases, the solutions have a simple form and can be obtained as follows:
\begin{enumerate}
\item If $|h_{SE}|\geq |h_{SD}|$, even without any help from the jammer relay for the eavesdropper, the secrecy rate is always zero. Thus, the jammer relay only needs to keep silent.
\item If $|h_{RD}|\geq \sqrt{P_S/P_{R}}|h_{SD}|$, then the pair $(\rho^*,N_Z^*) = (-h_{SD}/h_{RD}, 0)$ is optimal. When the link between the jammer relay and the destination is strong, the jammer relay can send a negatively correlated signal to completely cancel the source signal.
\item Suppose $|h_{SE}|<|h_{SD}|$ and $|h_{RD}|<\sqrt{P_S/P_{R}}|h_{SD}|$. If $|h_{RD}h_{SE}|>|h_{SD}h_{RE}|$ and $\frac{N_0(|h_{SD}|^2-|h_{SE}|^2)}{|h_{RD}h_{SE}|^2-|h_{SD}h_{RE}|^2}<P_{R}$, then
$\rho^* = 0$, $N_Z^* =  \frac{N_0(|h_{SD}|^2-|h_{SE}|^2)}{|h_{RD}h_{SE}|^2-|h_{SD}h_{RE}|^2}$ are optimal. This is the case when the jammer relay is capable of forcing secrecy rate to zero by transmitting only noise.
\end{enumerate}

When $|h_{SE}|<|h_{SD}|$ and $|h_{RD}|<\sqrt{P_S/P_{R}}|h_{SD}|$, numerical methods are used to solve the optimization problem of (\ref{eqn:OptProb}). In (\ref{eqn:OptProb}), it can be shown that the constraint $|\rho|^2 P_S + N_Z \leq P_{R}$ is not necessarily met with equality; i.e. it is possible that in the optimal solution the jammer relay should not transmit with full power. This is in contrast to the jamming problem without an eavesdropper, where it is best for the jammer to use full power \cite{KashyapBasarSrikant04}. Letting $\rho = |\rho|e^{j\theta}$, $N_Z = wP_S$, the constraint can be rewritten as
\begin{eqnarray}
|\rho| \leq \sqrt{P_{R}/P_{S}},\\
0 \leq \theta \leq 2\pi,\\
0 \leq w \leq P_{R}/P_{S}-|\rho|^2
\end{eqnarray}
To numerically solve for (\ref{eqn:OptProb}) we exhaustively search over all feasible $|\rho|$, $\theta$ and $w$ in the range.

\subsection{$Z_1^n$ is structured codeword}\label{subsec:Problem2structured}

In this subsection, we assume $Z_1^n$ is a structured codeword instead of noise. As in Section~\ref{subsec:unknownJRcodebook}, we assume the jammer relay shares its codebook only with the eavesdropper, but not with the adversarial destination. Note that, this set of assumptions define the worst case scenario along with the fact that the relay is informed about the source information causally.

The jammer relay is capable of choosing the rate of the codebook of $Z_1^n$ so that the eavesdropper can successfully decode it. After the eavesdropper decodes the dummy information carried by $Z_1^n$, it subtracts $Z_1^n$ and re-scales its received signal $Y_E$ to get
\begin{equation}
\bar{Y}_{E,i}= X_{S,i} + \bar{Z}_{E,i}
\end{equation}
where $\bar{Z}_{E,i} = \frac{Z_E}{h_{SE}+h_{RE}\rho}$ has zero mean and variance
\begin{equation}
\bar{N}_E = \frac{N_0}{|h_{SE}+h_{RE}\rho|^2}.\label{eqn:barNE}
\end{equation}
Since the destination does not know the jammer relay codebook, it can only treat $Z_1^n$ as noise. If the jammer relay's transmitted signal has the form in (\ref{eqn:JRstrategy}) and $Z_1^n$ is chosen to be Gaussian, then based on the results in \cite{LeungYCHellman78}, the optimal source distribution is $\mathcal{CN}(0,P_S)$. On the other hand, if the source distribution is fixed to $\mathcal{CN}(0,P_S)$, it is best for the jammer relay to construct codebook of $Z_1^n$ according to the distribution $\mathcal{CN}(0,N_Z)$ since only the destination is affected by $Z_1^n$ and the worst noise is the Gaussian one \cite{DiggaviCover01}. Hence, the pure-strategy achieving the Nash equilibrium for the source is to use $\mathcal{CN}(0,P_S)$ distribution to generate the source codebook while that of the jammer relay is to construct $Z_1^n$ as Gaussian codewords and transmit signal of the form in~(\ref{eqn:JRstrategy}) for some optimal $\rho$ and $N_Z$. Given that both $X_S$ and $Z$ are Gaussian at the equilibrium, the jammer relay ought to set the rate of the codebook of $Z$ to satisfy
\[R_Z < \log_2 \left(\frac{|h_{RE}|^2N_Z}{N_0+|h_{SE}+h_{RE}\rho|^2P_S}\right),\] in order to guarantee that the eavesdropper can decode it successfully.
With $X_S \sim \mathcal{CN}(0,P_S)$ and $X_R = \rho X_S + Z$, the secrecy rate can be expressed as
\begin{equation}
R_s(\rho, N_Z) = \left[\log_2\left(1+\frac{P_S}{\tilde{N}_D}\right) - \log_2\left(1+\frac{P_S}{\bar{N}_E}\right)\right]^+. \label{eqn:OptProb2}
\end{equation}

Comparing the equivalent noise at the eavesdropper $\bar{N}_E$ defined in (\ref{eqn:barNE}) and $\tilde{N}_E$ defined in (\ref{eqn:tildeNE}), we observe that the Nash equilibrium secrecy rates obtained when $Z_1^n$ is a structured codeword will always be less than or equal to the Nash equilibrium secrecy rates obtained when $Z_1^n$ is Gaussian noise. However, structured $Z_1^n$ requires the jammer and the eavesdropper to share the codebook information in advance and having $Z_1^n$ as Gaussian noise leads to a simpler system design.

To find the optimal $\rho$ and $N_Z$ at the Nash equilibrium, we need to minimize $R_s(\rho, N_Z)$ subject to $\rho^2P_S+N_Z \leq P_{R}$. As in the previous subsection, a closed form solution cannot be obtained in general. For some special cases, we can easily find an optimal solution. We list a few of these cases as follows.
\begin{enumerate}
\item If $|h_{SE}| \geq |h_{SD}|$, then it is enough for the jammer relay to keep silent.
\item If $|h_{RD}|\geq \sqrt{P_S/P_{R}}|h_{SD}|$, then $\rho^* = -h_{SD}/h_{RD}$ and $N_Z^* = 0$ are optimal.
\item If $|h_{SE}|<|h_{SD}|$ and $|h_{RD}|<\sqrt{P_S/P_{R}}|h_{SD}|$, and
if $\frac{|h_{SD}|^2-|h_{SE}|^2}{|h_{SE}h_{RD}|^2}\leq P_{R}/N_0$, then $(\rho^*,\allowbreak N_Z^*) \allowbreak = \allowbreak (0,\frac{|h_{SD}|^2-|h_{SE}|^2}{|h_{SE}h_{RD}|^2}N_0)$ is optimal.
\end{enumerate}

As in the previous subsection, we can use numerical methods to optimize the function in (\ref{eqn:OptProb2}) by exhaustively searching over all $|\rho|$'s, $\theta$'s and $w$'s.

\section{Numerical results}\label{sec:NumRes}

In this section we present some numerical results to show how the secrecy rate changes with the jammer relay-eavesdropper channel quality, when the jammer relay is given the source signal causally (Problem 2). We also compare the secrecy rates attained when $Z_1^n$ is Gaussian noise, Section \ref{subsec:Problem2noise} and $Z_1^n$ is a structured codeword, Section \ref{subsec:Problem2structured}. We finally provide comparisons of the secrecy rates for Problem 1 and Problem 2.

For Problem 2, when $Z_1^n$ is Gaussian noise, the Nash equilibrium secrecy rate is plotted as a function of $h_{RE}$ in Fig.~\ref{fig:Fig2} for $h_{SD} = 1$, $h_{SE} = 0.4 + 0.4j$ and $h_{RD} = 0.2 - 0.2j$ when $P_S = P_{R} = 10$ and $N_0 = 1$. For simplicity, we restrict $h_{RE}$ to be real in this plot. We note that in general multiple optimal choices for ($\rho^*, N_Z^*$) appear in general. The following two cases are also included in the figure for comparison: 1) the jammer relay only sends Gaussian noise; i.e. $X_R = Z$; 2) the jammer relay only sends a correlated version of source message; i.e. $X_R = \rho X_S$. When $h_{RE}$ is near $0.2$, $X_R = Z$ is close to optimal; when $h_{RE}$ is greater than $0.3$ or smaller than $-0.15$, the curve for $X_R =\rho X_S$ overlaps with the curve for $X_R = \rho X_S+Z$, which implies that for $h_{RE}$'s in these ranges the component of Gaussian noise $Z$ in the signal $X_R$ is not necessary and sending a correlated version of source message is already optimal. Also, we observe that as $h_{RE}$ grows sufficiently large, the secrecy rate $R_s^*$ drops to zero when the jammer relay's signal is of the form $X_R = \rho X_S +Z$.

For Problem 2, Fig.~\ref{fig:Fig6} shows the Nash equilibrium secrecy rate as a function of real $h_{RE}$ when $Z_1^n$ is a structured codeword, $h_{SD} = 1$, $h_{SE} = 0.4+0.4j$ and $h_{RD} = 0.2-0.2j$. As before, $P_S = P_{R} = 10$ and $N_0 = 1$. In Fig.~\ref{fig:Fig6}, for comparison the Nash equilibrium secrecy rate when the jammer relay signal is restricted to $X_R = Z$ with $Z_1^n$ being a structured codeword and $X_R = \rho X_S$ are also included. The secrecy rate for $X_R = \rho X_S$ is the same as in Fig.~\ref{fig:Fig2}, since $X_R$ does not depend on $Z$. Also the secrecy rate for $X_R = Z$ remains constant as $h_{RE}$ changes, which is due to the fact that $\tilde{N}_D$ and $\bar{N}_E$ do not depend on $h_{RE}$ when $\rho = 0$. However, in this case $h_{RE}$ has an impact on the maximum possible rate $R_Z$.
Under the same setting as in Fig.~\ref{fig:Fig2} and Fig.~\ref{fig:Fig6}, Fig.~\ref{fig:Fig7} compares the Nash equilibrium secrecy rates under four different scenarios: 1) Problem 1: uninformed relay with the jammer relay codebook known at the destination, Section \ref{subsec:GameTheoryForm}/\ref{subsec:OtherCases}; 2) Problem 1: uninformed relay with the jammer relay codebook unknown at the destination, Section \ref{subsec:unknownJRcodebook}; 3) Problem 2: informed relay with $Z_1^n$ being Gaussian noise, Section \ref{subsec:Problem2noise}; 4) Problem 2: informed relay with $Z_1^n$ being structured codewords, Section \ref{subsec:Problem2structured}. Our observations can be listed as follows:
\begin{itemize}
\item When the relay is uninformed of source message and the jammer relay codebook is unknown at the destination, as in Section \ref{subsec:unknownJRcodebook}, neither $\Omega_S$ nor $\delta_S$ depend on $h_{RE}$, and the secrecy rate remains unchanged for all $h_{RE}$'s. As the destination always treats the jammer relay's signal as noise while the eavesdropper may decode it, this scenario is actually equivalent to that of $X_R = Z$ in Fig.~\ref{fig:Fig6}.
\item When the relay is uninformed of the source message and the destination has the knowledge of the jammer relay codebook the Nash equilibrium secrecy rate can be increased for $|h_{RE}|<0.6$ compared with the no codebook case at the destination.
\item Among all four scenarios, an informed relay with $Z_1^n$ being structured codeword achieves the smallest secrecy rate.
\item For a range of $h_{RE}$'s studied, an informed relay with $Z_1^n$ as Gaussian noise can achieve a smaller secrecy rate than an uninformed relay with jammer relay codebook unknown at the destination. However, we observe that an informed relay with $Z_1^n$ as Gaussian noise results in the largest secrecy rate when $h_{RE}$ is between $0.15$ and $0.4$.
\end{itemize}

In Fig.~\ref{fig:Fig7}, for simplicity, we only considered real channel gain $h_{RE}$ between the jammer relay and the eavesdropper. In Fig.~\ref{fig:Fig8}, we fix $|h_{RE}| = 0.25$ and plot the Nash equilibrium secrecy rates as a function of the phase of $h_{RE}$, $\angle h_{RE}$ for the four different cases mentioned above. When the relay is uninformed of the source message, irrespective of whether the destination knows the jammer relay codebook or not, the Nash equilibrium secrecy rate only depends on the magnitude of $h_{RE}$ and therefore remains the same for all phases of $h_{RE}$. However, the Nash equilibrium secrecy rates in both cases of informed relay are sensitive to the changes in $\angle h_{RE}$. For example, as shown in Fig.~\ref{fig:Fig8}, when $0 \leq \angle h_{RE}< \pi$, the Nash equilibrium secrecy rate in either case decreases as $\angle h_{RE}$ increases and finally drops to zero when $\angle h_{RE}$ grows beyond a threshold.

\section{Conclusion}\label{sec:Conclusion}

In this paper a four terminal network with a source, a destination, an eavesdropper and a jammer relay is investigated. The source and the jammer relay have conflicting interests. The former aims higher secrecy rates, whereas the latter aims lower secrecy rates. Due to this conflict, this problem is formulated as a non-cooperative two-player zero-sum continuous game. Two different cases are studied: 1) the jammer relay does not hear the source, and 2) the jammer relay is given the source signal causally. For the first case, it is discussed that interference assistance is the only option. Under this assumption the optimal solution for the source and the jammer relay is found to be mixed strategies. The Nash equilibrium secrecy rate of the game is calculated, in addition to optimal cumulative distribution functions for the source and the jammer relay. A discrete approximation to the continuous game, whose value can be made arbitrarily close to the value of the continuous game, is also suggested. For the second case, limiting the jammer relay strategies to a combination of pure relaying and interference assisted eavesdropping schemes, the Nash equilibrium of the game is found and an achievability scheme is suggested. Our results show that the presence of the jammer relay decreases the secrecy rates significantly. If the jammer relay is informed of the source signal, the secrecy rates are even lower. Future work includes the more general half-duplex jammer relay case, in which the relay is not given the source signal for free, but has to listen to it to be able to perform pure relaying.

\appendices
\section{Proof of Theorem 1}
For the equivalent game in Fig. \ref{fig:ReducedCaseA}, the achieved secrecy rate is given by
\begin{equation}
R_s(\xi,\eta) = \left\{
\begin{array}{l}
0, \quad \xi + \eta > \Omega_S + \delta_R + L\\
\xi + \eta - \Omega_S - \delta_R - aL,  \\
~~~~~ \Omega_S + \delta_R + aL<\xi + \eta \leq \Omega_S+\delta_R+L\\
0, \quad \xi + \eta \leq \Omega_S+\delta_R+aL
\end{array}
\right..
\label{eqn:secrate1}
\end{equation}

We first assume $L = 1$ and $(\Omega_S,\delta_R) = (0,0)$ and then use the game-theoretic techniques in \cite{Karlin} to solve the continuous game played over the unit square. For convenience, we denote the jammer relay's pure strategy by $\lambda = 1 - \eta$ instead of $\eta$. Fig.\ref{fig:unitsquare} illustrates the unit square where the normalized game is played. Rewriting (\ref{eqn:secrate1}), we get
\begin{equation}
R_s(\xi,\lambda) = \left\{
\begin{array}{l l}
M_1(\xi,\lambda) &\quad \lambda \geq \xi\\
M_2(\xi,\lambda)&\quad \lambda < \xi
\end{array}
\right.
\label{eqn:kernel_unitsquare}
\end{equation}
where
\begin{equation}
M_1(\xi,\lambda) = \left\{
\begin{array}{l l}
0 &\quad \lambda > \xi + 1 - a\\
\xi-\lambda + 1 - a &\quad \xi \leq \lambda \leq \xi+1-a
\end{array}
\right.,
\label{eqn:M1}
\end{equation}
and
\begin{equation}
M_2(\xi,\lambda) = 0
\label{eqn:M2}
\end{equation}
$M_1(\xi,\lambda)$ and $M_2(\xi,\lambda)$ are defined over the closed triangles $\Sigma_1 = \{(\xi,\lambda)|0\leq \xi \leq \lambda \leq 1\}$ and $\Sigma_2 = \{(\xi,\lambda)|0\leq \lambda \leq \xi \leq 1\}$ respectively as shown in Fig.~\ref{fig:unitsquare}. Here $a$ is a constant between 0 and 1. In game theory, the function $R_s(\xi,\lambda)$ is called the \emph{kernel} of the game.

For the game in Fig. \ref{fig:unitsquare}, the source and the jammer relay strategies can be represented by the cdfs $F_\xi(\xi)$ and $F_\lambda(\lambda)$ defined on $[0,1]$ respectively. Given source strategy $F_\xi(\xi)$ and jammer relay strategy $F_\lambda(\lambda)$, the expected pay-off of the source is given by $\int_0^1\int_0^1 R_s(\xi,\lambda)dF_\xi(\xi)dF_\lambda(\lambda)$.

A solution to the game with kernel $R_s(\xi,\lambda)$ \cite{Karlin} is a pair of cdfs $F_\xi$ and $F_\lambda$ together with a real number $R_s^*$ such that
\begin{equation}
\int_0^1 R_s(\xi,\lambda)dF_\xi(\xi) \geq R_s^*  \quad \mbox{ for all $0\leq\lambda\leq 1$}
\end{equation}
and
\begin{equation}
\int_0^1 R_s(\xi,\lambda)dF_\lambda(\lambda) \leq R_s^* \quad \mbox{ for all $0\leq\xi\leq 1$}
\end{equation}
Here we call $R_s^*$ the value of the game or Nash equilibrium secrecy rate and $F_\xi$ and $F_\lambda$ optimal strategies for the source and the jammer relay respectively.

Motivated by the results on \emph{games of timing} in \cite{Karlin}, we first assume that the optimal strategies for the source and the jammer relay are of the forms $F_{\xi} = (\alpha I_0,f_{\xi})$ and $F_{\lambda} = (f_{\lambda},\beta I_1)$. Here $(\alpha I_0,f_{\xi})$ denotes a distribution made up of a density function $f_{\xi}$ spread on the interval $(0,1]$ and one jump of magnitude $\alpha$ located at $0$. Similarly, $(f_{\lambda},\beta I_1)$ is a distribution made up of a density function spread on the interval $[0,1)$ and one jump of magnitude of $\beta$ located at $1$. Then, we solve the game analytically under these assumptions. After a particular solution is found, we verify that it is indeed a solution of the game. As the kernel $R_s(\xi,\lambda)$ depends on the parameter $a$, we will find optimal strategies for both players for different parameters of $a$.

For $0 \leq a \leq 1/2$ (i.e., $k=0$) mentioned in Theorem \ref{thm:Solution}, suppose $0\leq \lambda \leq 1-a$, then
\begin{eqnarray}
\int_0^1 R_s(\xi,\lambda)dF_{\xi}(\xi) & = & \alpha R_s(0,\lambda) + \int_{0^+}^{\lambda}M_1(\xi, \lambda)f_{\xi}(\xi)d\xi \nonumber\\
 &=&  \alpha (1-a-\lambda) \nonumber \\
 &&{+}\: \int_{0^+}^{\lambda}(\xi-\lambda+1-a)f_{\xi}(\xi)d\xi
\end{eqnarray}
Differentiating w.r.t. $\lambda$ and equating to zero, we get
\begin{equation}
G(\xi) = \alpha(e^{\frac{\xi}{1-a}}-1), \quad 0\leq\xi\leq 1-a
\end{equation}
where $G(\xi) = \int_{0^+}^\xi f_\xi(u)du$.
Similarly, for $\lambda$ such that $1-a<\lambda\leq 1$, we get
\begin{equation}
G(\xi) = -\alpha + C\exp(\frac{\xi}{1-a}) - \frac{\alpha}{1-a}\xi e^{\frac{\xi}{1-a}-1},\quad 1-a<\xi \leq 1
\end{equation}
Due to continuity of $G(\xi)$ at $\xi = 1-a$, we have $C = \alpha(1+e^{-1})$.
Also, since $G(1) = 1 - \alpha$, we obtain $\alpha = g_0(a) = \frac{e^{-1/(1-a)}}{1-\frac{a}{1-a}e^{-1}}$.
Therefore, the optimal strategy for the source, $F_\xi$, has the form:
\begin{equation}
F_\xi(\xi) =
\left\{
\begin{array}{l l}
\alpha &\quad \mbox{if $\xi = 0$}\\
\alpha + G(\xi) &\quad \mbox{if $0<\xi\leq 1$}
\end{array}
\right..
\label{eqn:cdf_xi}
\end{equation}

In the same manner, the optimal strategy for the jammer relay, $F_\lambda$, can be obtained
\begin{equation}
F_\lambda(\lambda) =
\left\{
\begin{array}{l l}
1-\alpha-G(1-\lambda) &\quad\mbox{if $0\leq \lambda < 1$}\\
1 &\quad \mbox{if $ \lambda = 1$}
\end{array}
\right..
\label{eqn:cdf_lambda}
\end{equation}
Alternatively, when $\eta$ is used to denote the jammer relay's pure strategy, the jammer relay's optimal strategy can be expressed as $F_{\eta}(\eta) = 1 - F_\lambda(1-\eta) = F_{\xi}(\eta)$.

It can be readily verified that $
\int_0^1 R_s(\xi,\lambda)dF_\xi(\xi) = \alpha(1-a)$, for all $0\leq\lambda\leq 1$. Also, $
\int_0^1 R_s(\xi,\lambda)dF_\lambda(\lambda) = \alpha(1-a)$, for all $0\leq\xi\leq 1$. Hence, when $0\leq a\leq 1/2$, $F_\xi$ and $F_\lambda$ (or $F_\eta$) in (\ref{eqn:cdf_xi}) and (\ref{eqn:cdf_lambda}) are the optimal strategies for the source and the jammer relay and the secrecy rate is $R_s^* = \alpha(1-a)$.

Now suppose $a \in (k/(k+1),(k+1)/(k+2)]$ for $k>0$. In this case, the optimal strategies can be obtained using the same method as above except that the resulting $F_\xi(\xi)$ is not differentiable at points $\xi =0, 1- a, 2(1-a),..., (k+1)(1-a)$. Hence the functional forms of $\alpha$, $F_{\xi}(\xi)$ and $F_{\eta}(\eta)$ depend on $k$, and for a given $k$ both $\alpha$ and the cdfs can be readily derived.

For the more general case when the edge is equal to $L = \Omega_R - \delta_R$ and the left lower corner point of the square is located at $(\Omega_S, \delta_R)$, we have $ a = \frac{\delta_S-\Omega_S}{L}$. Similar to the above discussion, optimal cdfs and Nash equilibrium secrecy rate can be derived to result in Theorem \ref{thm:Solution}.

\section{Proof of Theorem 2}
The proof of this theorem is based on Theorem 8 in Chapter 17 of \cite{SzepForgo} and the approximation techniques suggested there.

Let us first consider the game over the unit square when $L = 1$ and $(\Omega_S,\delta_R) = (0,0)$. According to Theorem 8 in Chapter 17 of \cite{SzepForgo}, the game with the kernel function $R_s(\xi,\lambda)$ in (\ref{eqn:kernel_unitsquare}) has a value and there exists a pair of ``$\epsilon$-optimal'' strategies for the source and the jammer relay. \footnotemark\footnotetext{A pair of cdfs $F_{\xi}^0(\xi)$ and $F_{\lambda}^0(\lambda)$ are said to be $\epsilon$-optimal and the approximate value of the game $v_\epsilon$ is said to be ``$\epsilon$ good'', if (i) the expected payoff $v_1$ calculated for $F_{\xi}^0(\xi)$ and for any $F_{\lambda}(\lambda)$ satisfies $v_1 \geq v_{\epsilon} - \epsilon$, (ii) the expected payoff $v_2$ calculated for any $F_{\xi}(\xi)$ and for $F_{\lambda}^0(\lambda)$ satisfies $v_2 \leq v_{\epsilon} - \epsilon$, and (iii) $|v_\epsilon - v| \leq \epsilon$, where $v$ is the value of the game \cite{SzepForgo}.}
Hence, it is possible to solve the game using approximate methods. To obtain the approximation, we divide the unit interval by $T-1$ inner grid points equally spaced in it so that the square is divided into a uniform grid of $(T+1)^2$ samples. Therefore, the source chooses among discrete pure strategies $\xi_i = i/T$ ($i = 0,1,...,T$) while the jammer relay chooses among $\lambda_j = j/T$ ($j = 0,1,...,T$). The payoff matrix\footnote{Note that if we use the kernel $R_s(\xi,\eta)$ instead of $R_s(\xi,\lambda)$, the payoff matrix would become the one in Theorem 2.} is $A = [a_{ij}]$, where $a_{ij} = R_s(\xi_i,\lambda_j)$, $i,j = 0,1,...,T$. For this discrete game, a mixed-strategy Nash equilibrium always exists and its value can be obtained using linear programming.

From \cite{SzepForgo}, in order for the equilibrium strategies of the discrete game to be $\epsilon$-optimal for the original continuous game, $T$ needs to be chosen such that $T \geq \frac{2\max [K_1, K_2]}{\epsilon}$, where $K_1$ and $K_2$ are two constants chosen to satisfy
\begin{eqnarray}
\begin{array}{l}
  |M_1(\xi,\lambda)-M_1(\xi,\lambda)|\leq K_1 |(\xi,\lambda) - (\xi',\lambda')| \nonumber\\
 ~~~~~~~~~~~~~~~~~~~~~~~~~~~~~~~ \text{for all}\; (\xi,\lambda)\in \Sigma_1,\; (\xi',\lambda')\in \Sigma_1, \\
 |M_2(\xi,\lambda)-M_2(\xi,\lambda)|\leq K_2 |(\xi,\lambda) - (\xi',\lambda')|\nonumber\\
 ~~~~~~~~~~~~~~~~~~~~~~~~~~~~~~~ \text{for all}\; (\xi,\lambda)\in \Sigma_2,\; (\xi',\lambda')\in \Sigma_2,
\end{array}
\end{eqnarray}

$M_1$ and $M_2$ are as defined in (\ref{eqn:M1}) and (\ref{eqn:M2}).
In our problem, the above conditions hold for $K_1 = \sqrt{2}$ and $K_2 = 0$, therefore we only need to choose $T$ to be greater than $\frac{2\sqrt{2}}{\epsilon}$. On the other hand, if $T$ is fixed, difference between values of the discrete game and the continuous game is at most $2\sqrt{2}/T$. For example, if we set $\epsilon = 0.00708$, it suffices to choose $T$ to be 400.

Similar to Appendix A, for the more general case when the edge is equal to $L = \Omega_R - \delta_R$ and the left lower corner point of the square is located at $(\Omega_S, \delta_R)$, a discrete game can be used to approximate the continuous game as in Theorem \ref{thm:Discrete}. The difference between the values of the two games would be scaled by $L$; i.e., within $2\sqrt{2}L/T$.

\nocite{YukselLiuE09}\nocite{YukselLiuE10}


\begin{figure}[t]
\begin{center}
\includegraphics[width= 3 in]{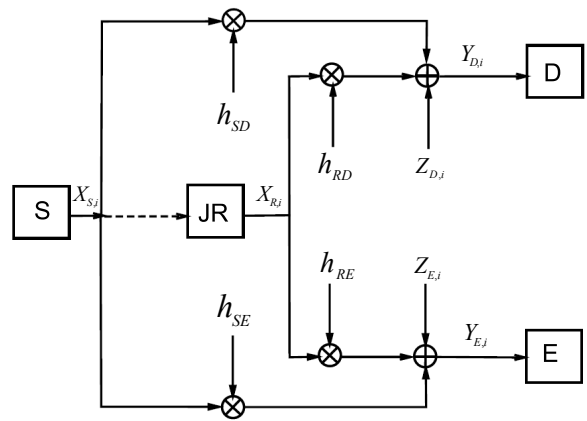}
\caption{The system model shows the source (S), the destination (D),
the eavesdropper (E) and the jammer relay (JR). The jammer relay aims to assist the
eavesdropper. The S-JR link is shown dashed, as the jammer relay does not know the source signal in the first model, whereas it is given the source signal causally in the second.} \label{fig:systemmodel}
\end{center}
\end{figure}

\begin{figure}[t]
\centering
\includegraphics[width=8cm]{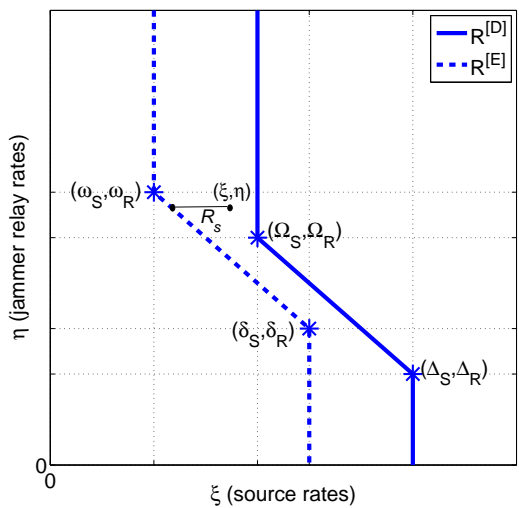}
\caption{Problem 1: Boundaries for $\mathcal{R}^{[D]}$ and $\mathcal{R}^{[E]}$, defined respectively in (\ref{eqn:R-D}) and (\ref{eqn:R-E}), under conditions (\ref{eqn:cond1})-(\ref{eqn:cond5}).}
\label{fig:DifferentCases}
\end{figure}

\begin{figure}[t]
\begin{center}
\includegraphics[width= 8cm]{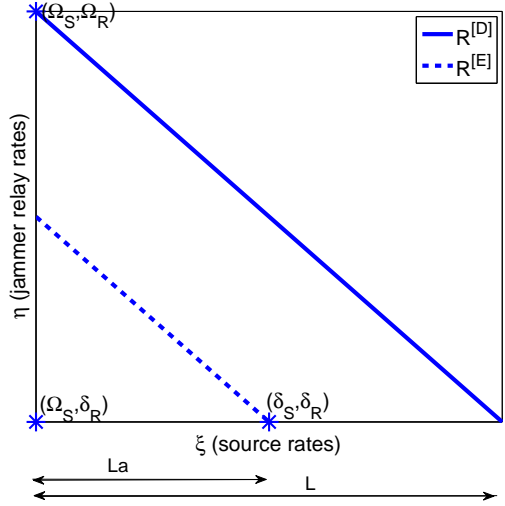}
\caption{The equivalent game for the game in Fig.~\ref{fig:DifferentCases}.} \label{fig:ReducedCaseA}
\end{center}
\end{figure}

\begin{figure}
\centering
\includegraphics[width = 8cm]{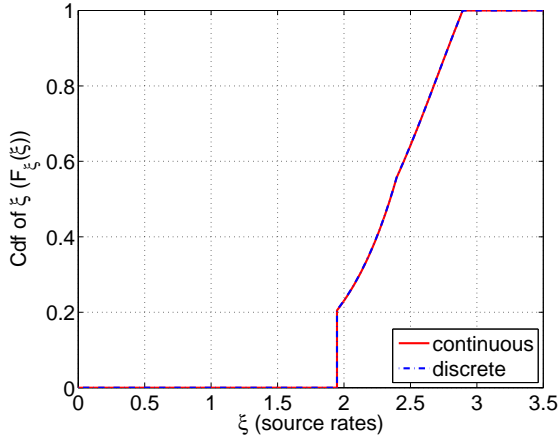}
\caption{Problem 1: Optimal cdfs for the source for continuous and discrete games when $a = 0.5255$, where $a$ is defined in Theorem \ref{thm:Solution}.}
\label{fig:CDFcaseAa1}
\end{figure}

\begin{figure*}[h]
\centering
\subfigure[Case A]{
\includegraphics[scale=0.45]{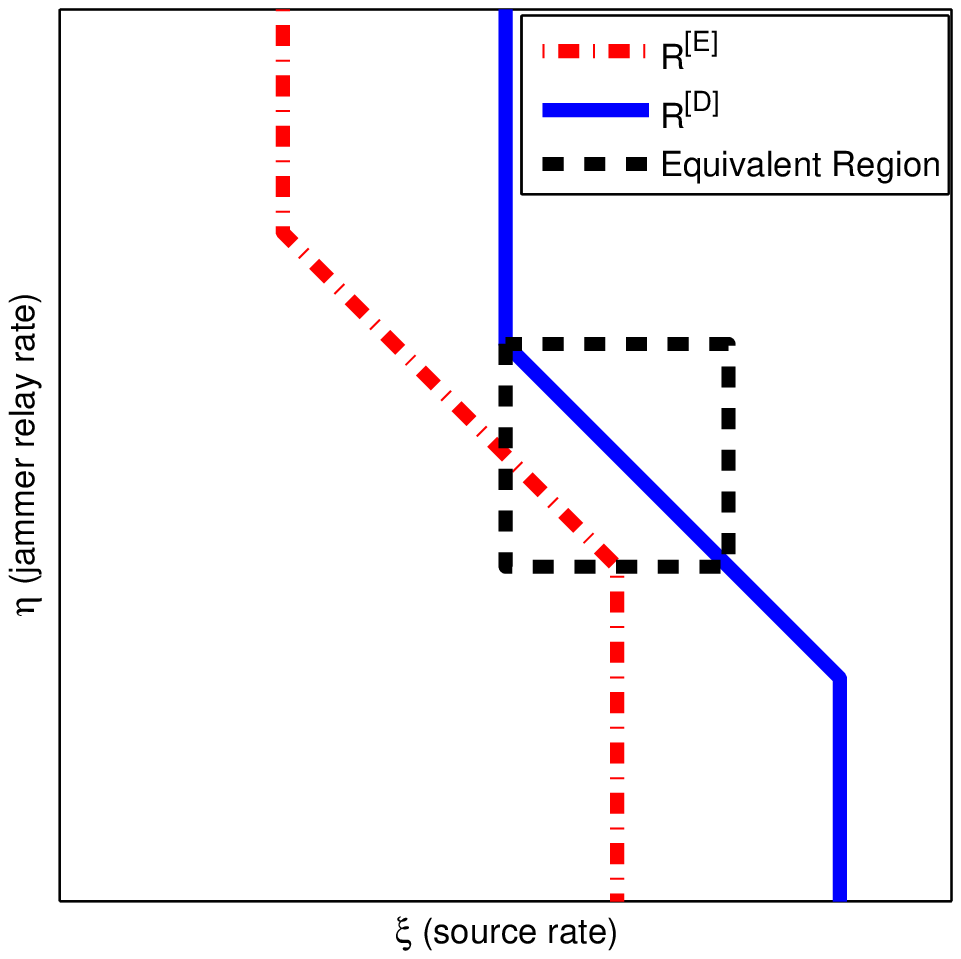}
\label{fig:subfig1}
}
\hspace{-1.6cm}
\subfigure[Case B]{
\includegraphics[scale=0.45]{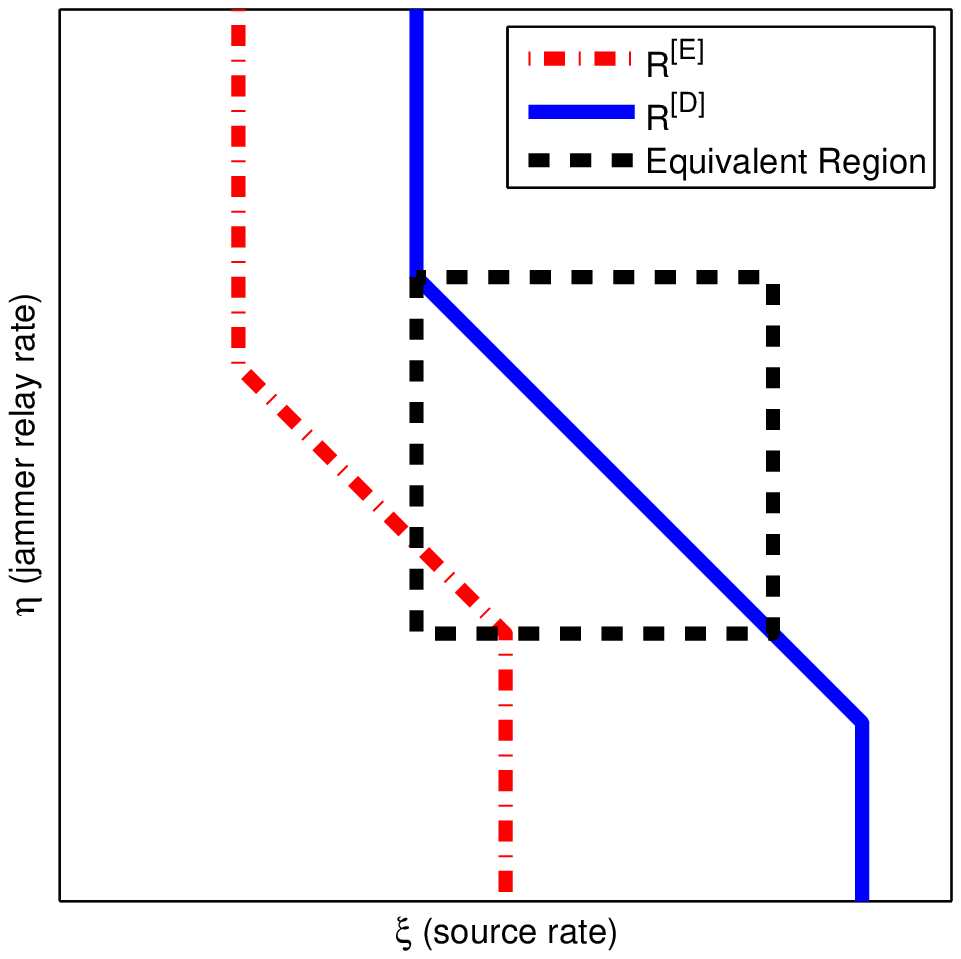}
\label{fig:subfig2}
}
\hspace{-1.6cm}
\subfigure[Case C]{
\includegraphics[scale=0.45]{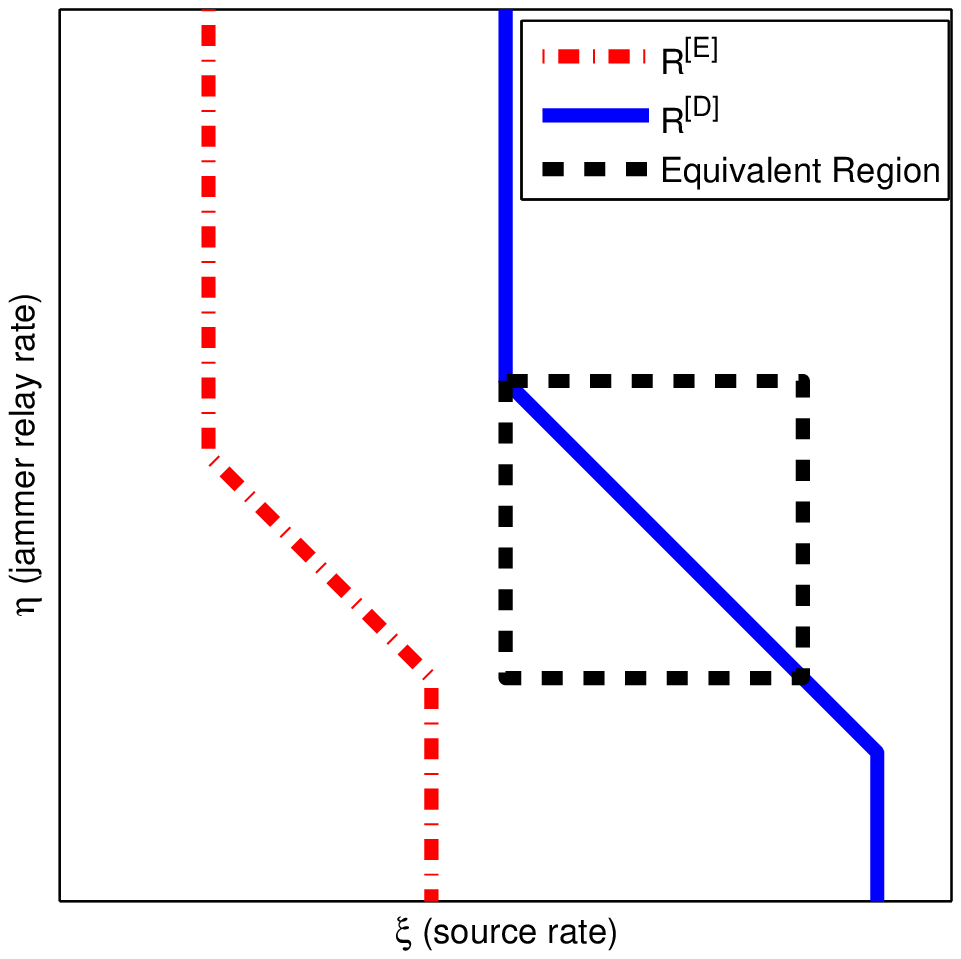}
\label{fig:subfig3}
}
\hspace{-1.6cm}
\subfigure[Case D]{
\includegraphics[scale=0.45]{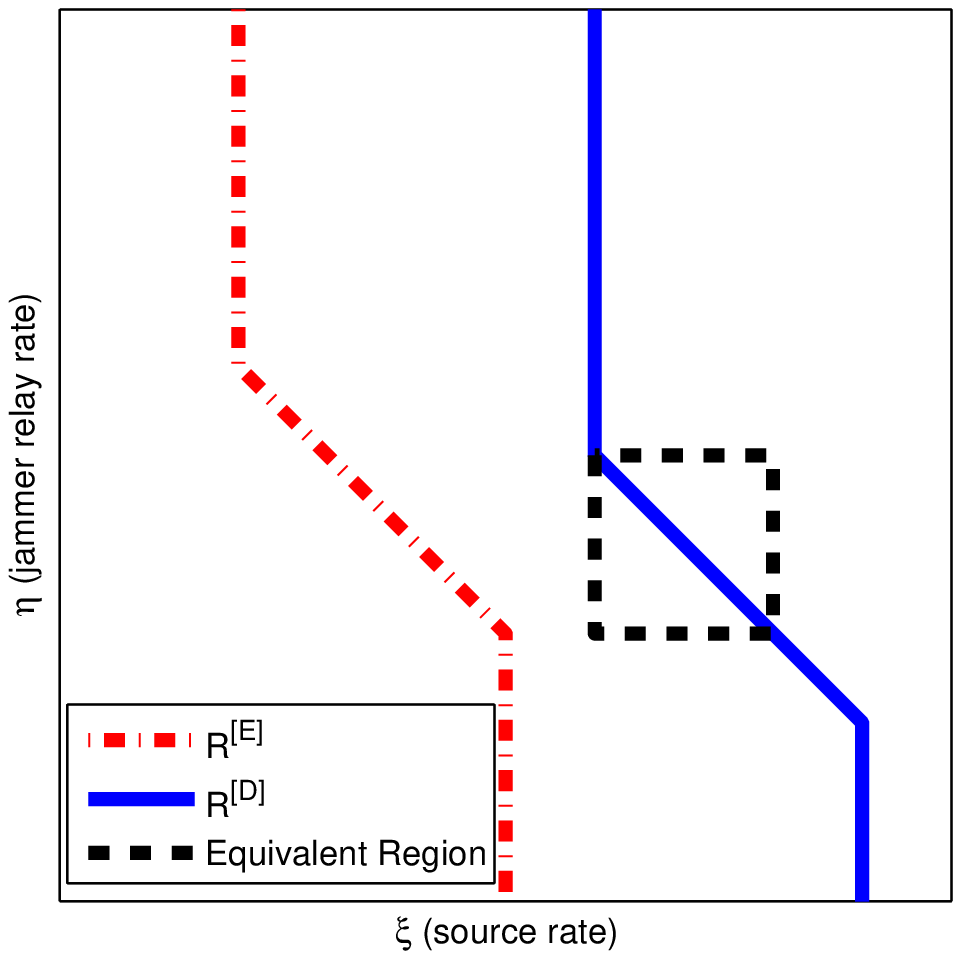}
\label{fig:subfig4}
}
\hspace{-1.6cm}
\subfigure[Case E]{
\includegraphics[scale=0.45]{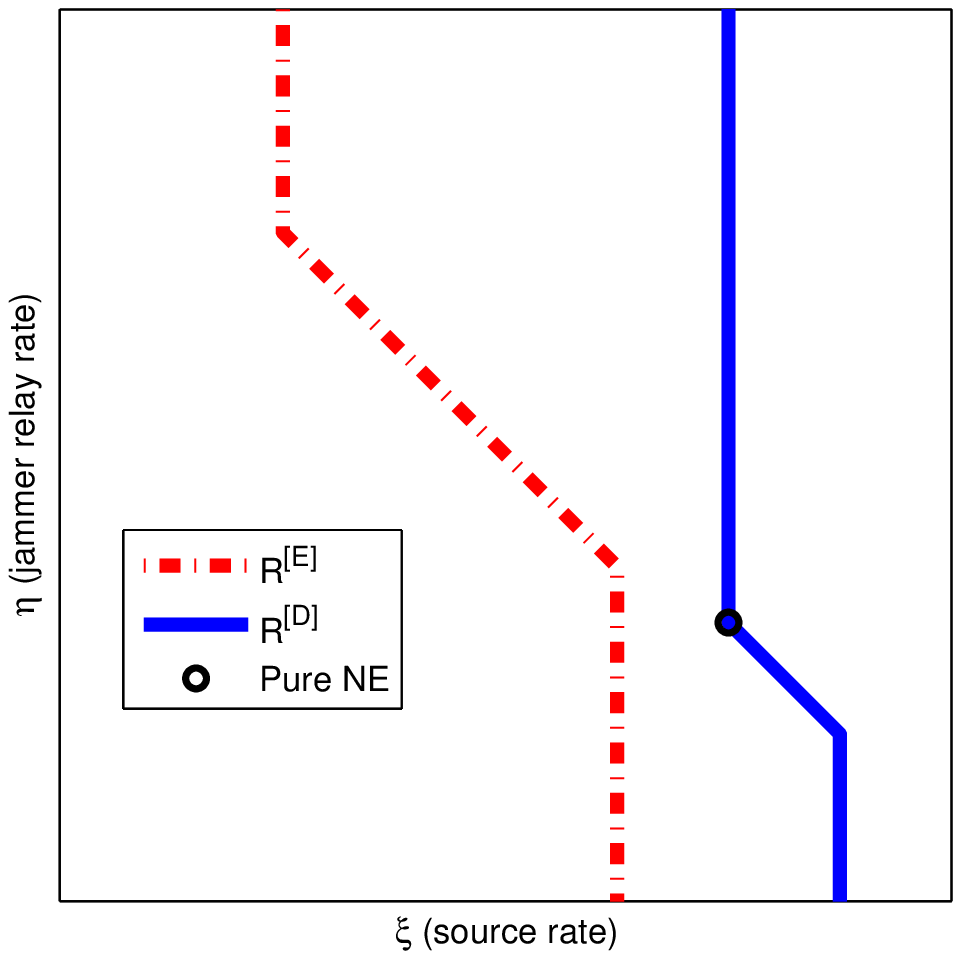}
\label{fig:subfig5}
}
\hspace{-1.6cm}
\subfigure[Case F]{
\includegraphics[scale=0.45]{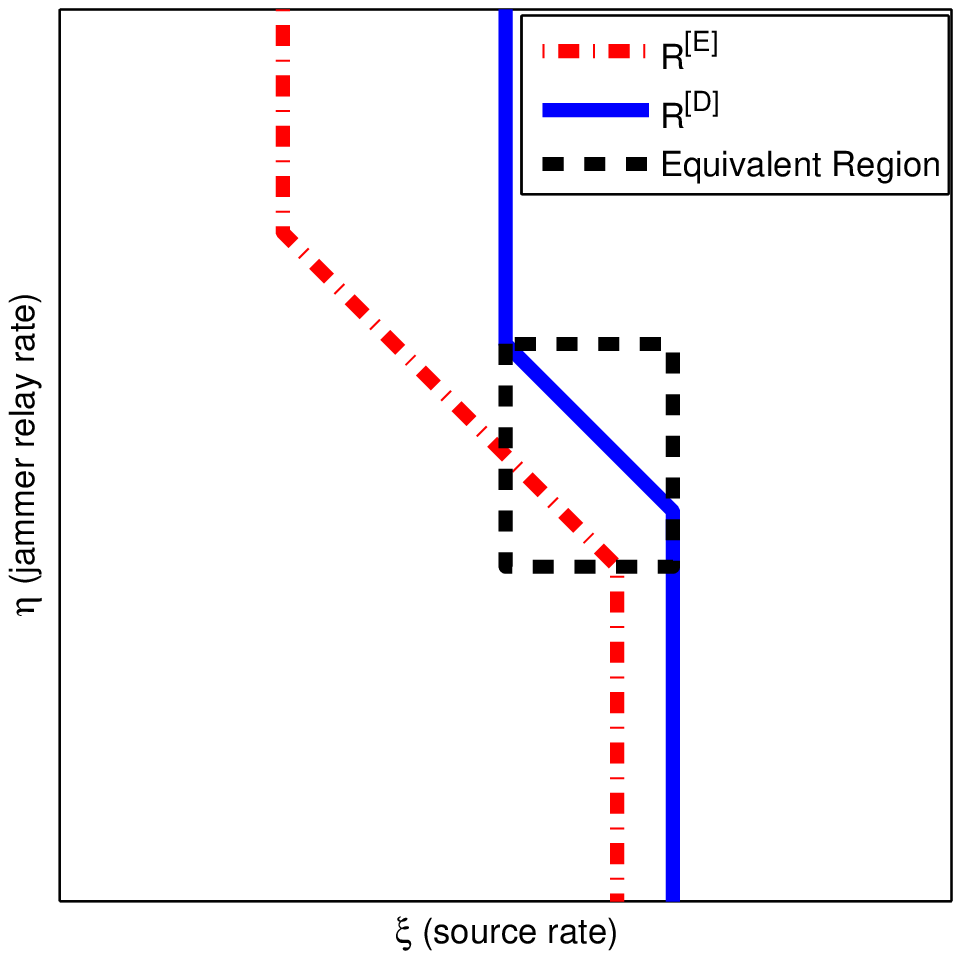}
\label{fig:subfig6}
}
\hspace{-1.6cm}
\subfigure[Case G]{
\includegraphics[scale=0.45]{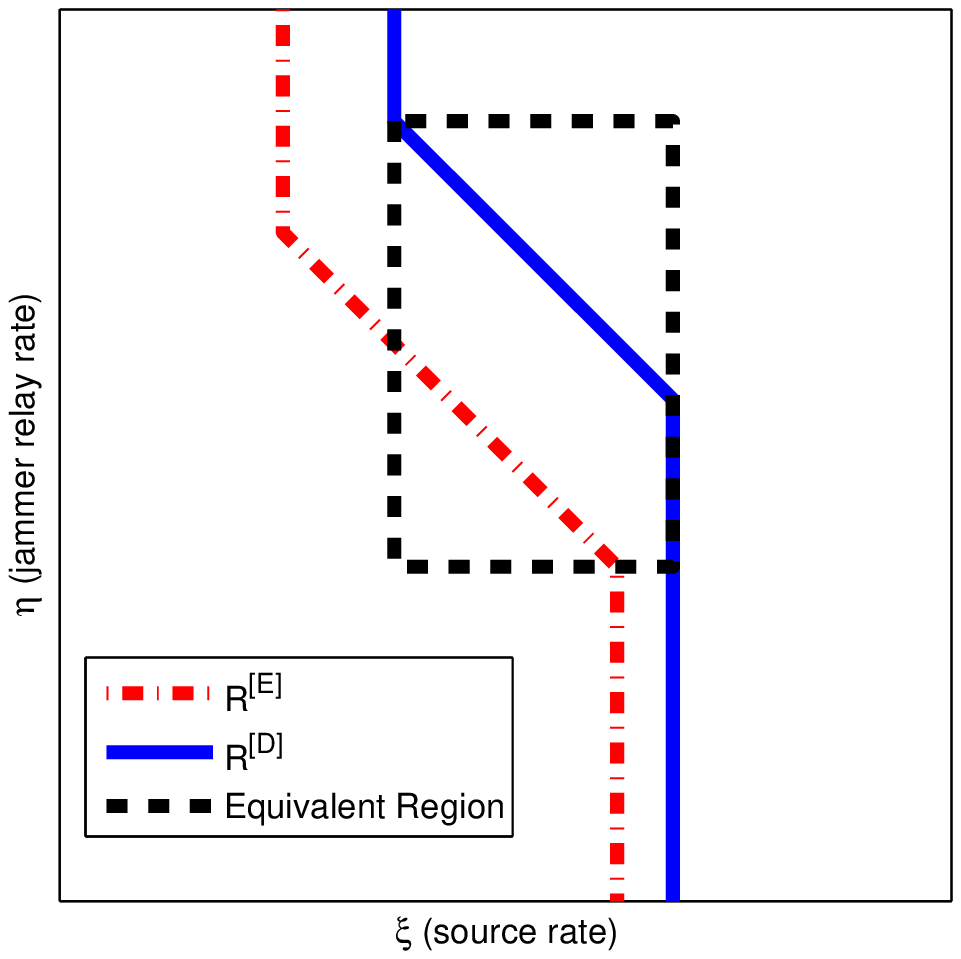}
\label{fig:subfig7}
}
\hspace{-1.6cm}
\subfigure[Case H]{
\includegraphics[scale=0.45]{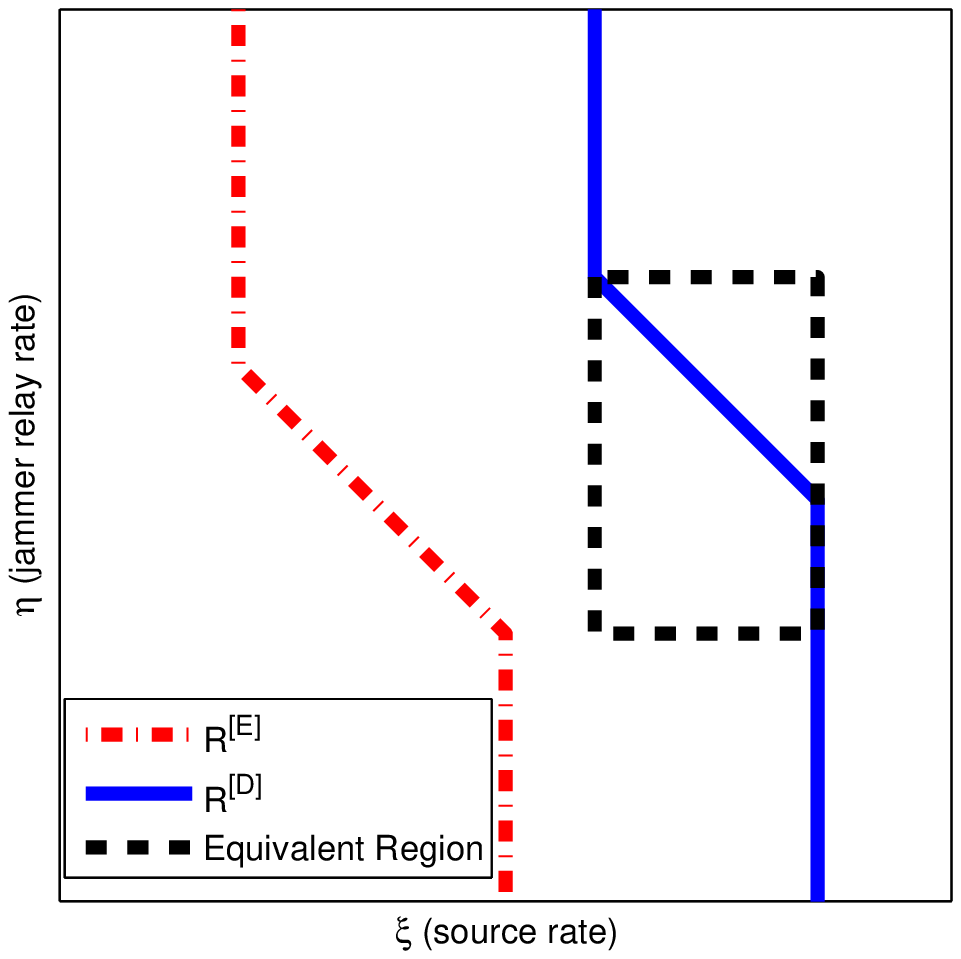}
\label{fig:subfig8}
}
\hspace{-1.6cm}
\subfigure[Case K]{
\includegraphics[scale=0.45]{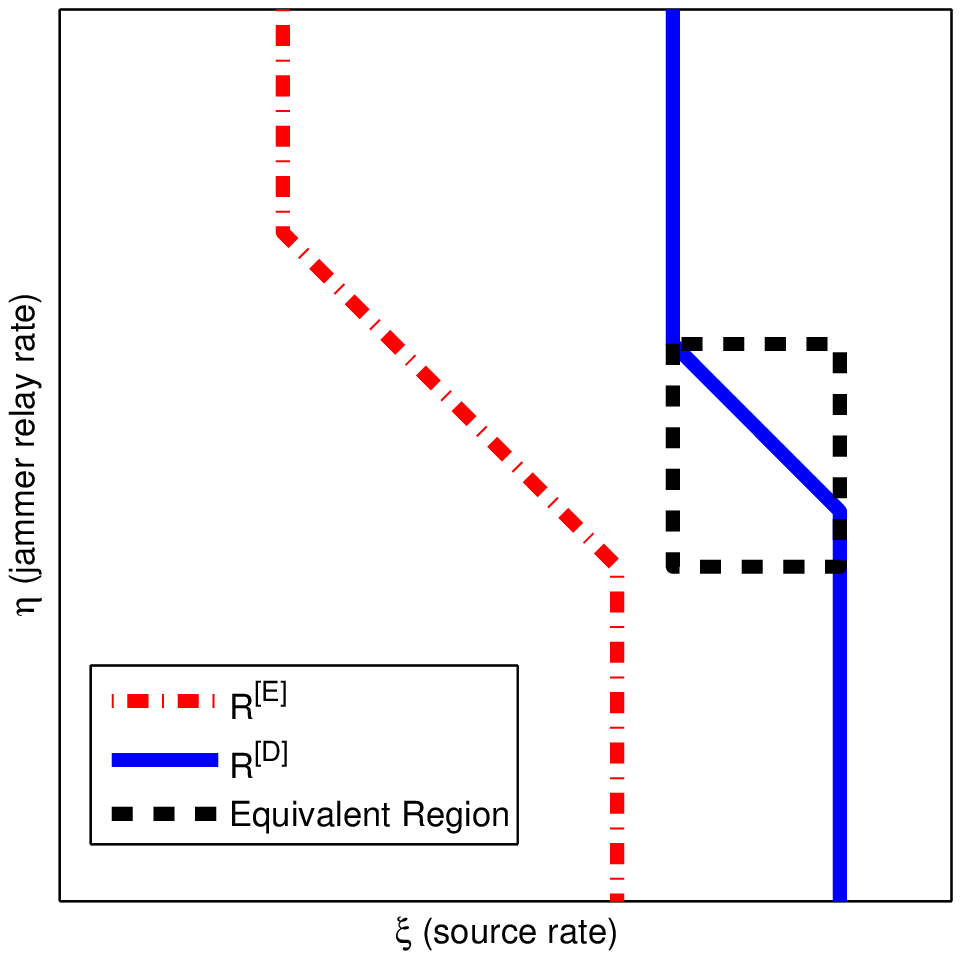}
\label{fig:subfig9}
}
\hspace{-1.6cm}
\subfigure[Case L]{
\includegraphics[scale=0.45]{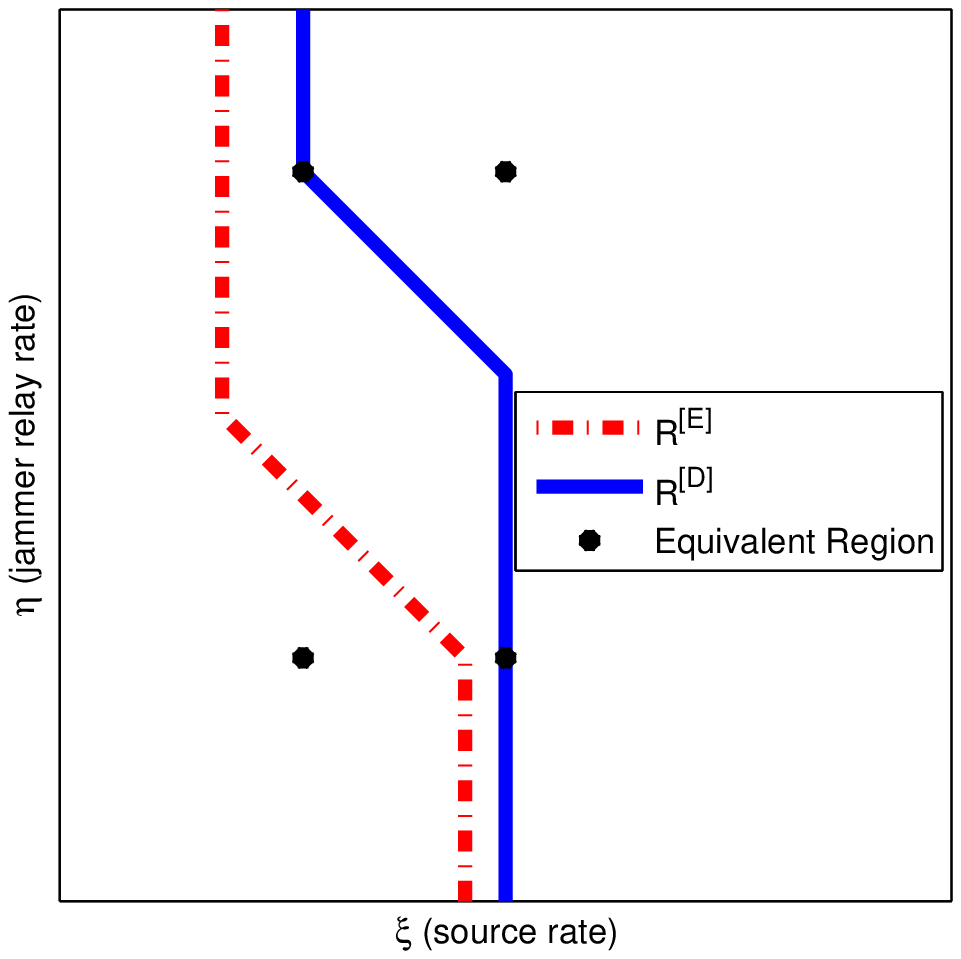}
\label{fig:subfig10}
}
\hspace{-1.6cm}
\subfigure[Case M]{
\includegraphics[scale=0.45]{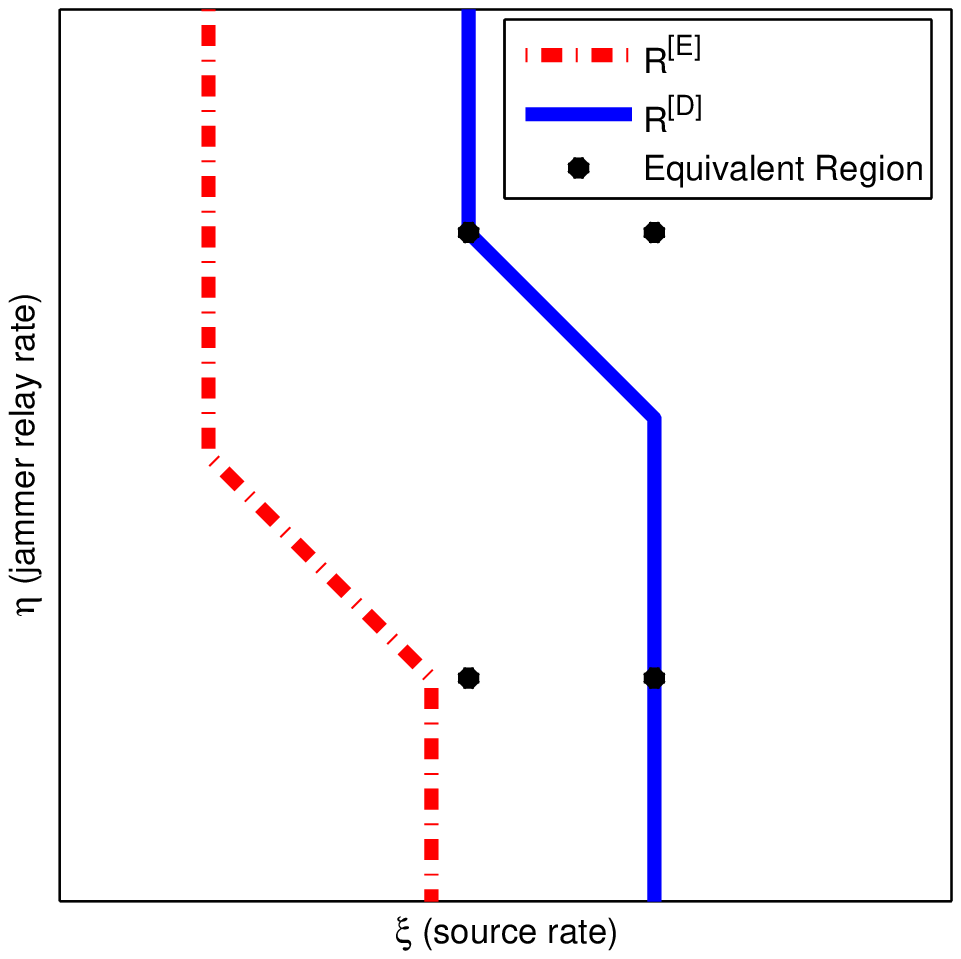}
\label{fig:subfig11}
}
\caption{Problem 1: Cases A-M. Case N, which is not shown in the figure, is the case when $\mathcal{R}^{[D]}$ and $\mathcal{R}^{[E]}$, respectively defined in (\ref{eqn:R-D}) and (\ref{eqn:R-E}), intersect or $\mathcal{R}^{[D]}$ is contained in $\mathcal{R}^{[E]}$.}\label{fig:casea_m}
\end{figure*}

\begin{figure}[t]
\centering
\includegraphics[width = 8cm]{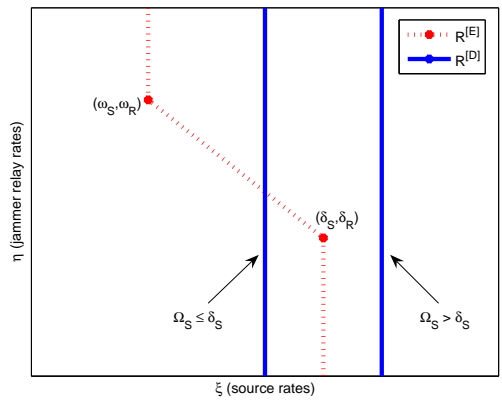}
\caption{Problem 1: boundary regions when the destination does not know the jammer relay codebook.}\label{fig:region_combined}
\end{figure}

\begin{figure}[t]
\centering
\includegraphics[width = 8cm]{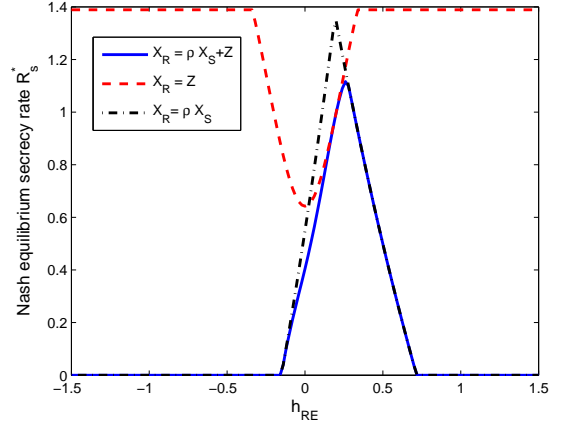}
\caption{Problem 2: Nash equilibrium secrecy rate as a function of $h_{RE}$ when $h_{SD} = 1$, $h_{SE} = 0.4+0.4j$ and $h_{RD} = 0.2-0.2j$. The signal $Z_1^n$, defined in (\ref{eqn:JRstrategy}), is Gaussian noise.}
\label{fig:Fig2}
\end{figure}

\begin{figure}
\centering
\includegraphics[width=8cm]{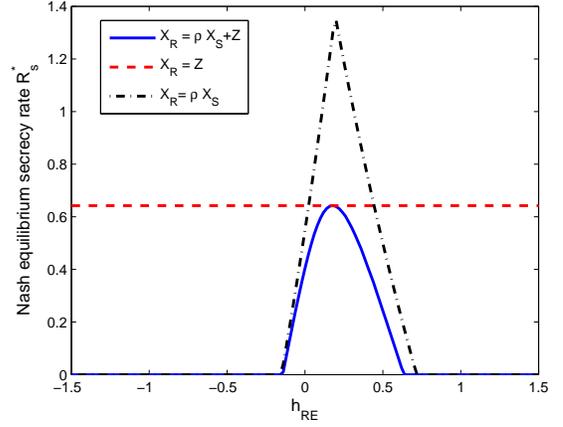}
\caption{Problem 2: Nash equilibrium secrecy rate as a function of $h_{RE}$ when $h_{SD} = 1$, $h_{SE} = 0.4+0.4j$ and $h_{RD} = 0.2-0.2j$. The signal $Z_1^n$, defined in (\ref{eqn:JRstrategy}), is a structured codeword.}
\label{fig:Fig6}
\end{figure}

\begin{figure}
\centering
\includegraphics[width= 8cm]{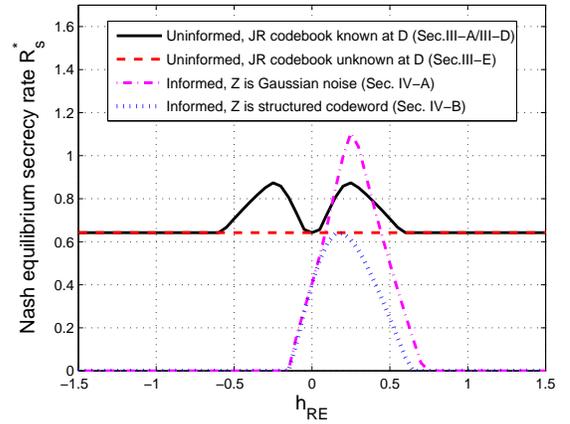}
\caption{Comparison of Nash equilibrium secrecy rates under different scenarios as a function of real $h_{RE}$ when $h_{SD} = 1$, $h_{SE} = 0.4+0.4j$ and $h_{RD} = 0.2-0.2j$.}
\label{fig:Fig7}
\end{figure}

\begin{figure}
\centering
\includegraphics[width= 8cm]{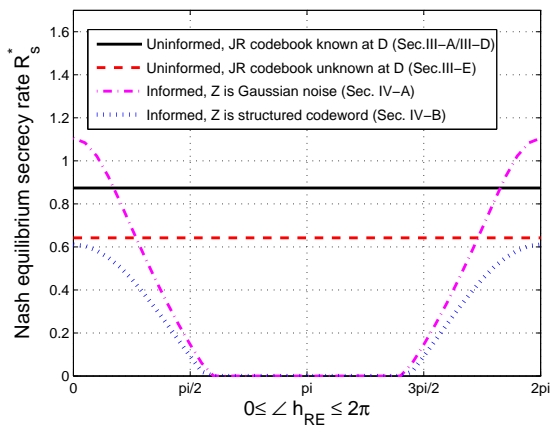}
\caption{Comparison of Nash equilibrium secrecy rates under different scenarios as a function of $\angle h_{RE}$ when $|h_{RE}| = 0.25$, $h_{SD} = 1$, $h_{SE} = 0.4+0.4j$ and $h_{RD} = 0.2-0.2j$.}
\label{fig:Fig8}
\end{figure}

\begin{figure}[t]
\begin{center}
\includegraphics[width= 8cm]{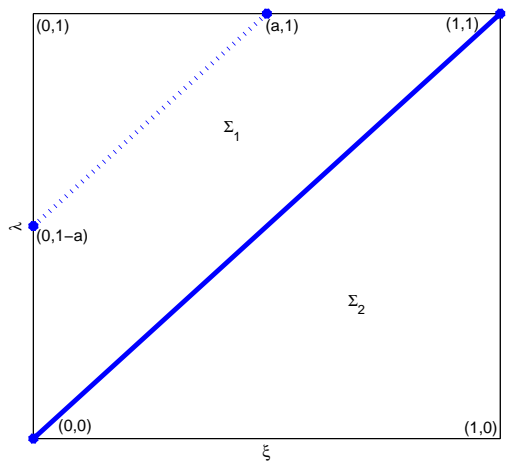}
\caption{Problem 1: the simplified game over the unit square} \label{fig:unitsquare}
\end{center}
\end{figure}

\end{document}